\documentclass[envcountsame]{sigplanconf}

\pdfoutput=1

\usepackage{twelftom2}


\let\oldmaketitle=\maketitle
\gdef\maketitle{
\oldmaketitle
\thispagestyle{plain}
\pagestyle{plain}
}

\ProcessOptions

\title{Towards Extracting Explicit Proofs from Totality Checking in
  \Twelf\thanks{Draft of \today}}
\authorinfo{Yuting Wang}
           {University of Minnesota, USA}
           {yuting@cs.umn.edu}
\authorinfo{Gopalan Nadathur}
           {University of Minnesota, USA}
           {gopalan@cs.umn.edu}

\begin{document}

\maketitle

\begin{abstract}
The Edinburgh Logical Framework (\LF) is a dependently type
$\lambda$-calculus that can be used to encode formal systems.
The versatility of \LF allows specifications to be constructed also
about the encoded systems.
The \Twelf system exploits the correspondence between formulas and
types to give specifications in \LF a logic programming
interpretation.
By interpreting particular arguments as input and others as output,
specifications can be seen as describing non-deterministic functions.
If particular such functions can be shown to be total, they represent
constructive proofs of meta-theorems of the encoded systems.
\Twelf provides a suite of tools for establishing totality.
However, all the resulting proofs of meta-theorems are implicit: \Twelf's
totality checking does not yield a certificate that can be given to a
proof checker.
We begin the process here of making these proofs explicit.
We treat the restricted situation in \Twelf where context definitions
(regular worlds) and lemmas are not used.
In this setting we describe and prove correct a translation of the
steps in totality checking into an actual proof in the companion logic
$\M2$.
We intend in the long term to extend our translation to all of \Twelf
and to use this work as the basis for producing proofs in the related
Abella system.

\end{abstract}

\section{Introduction}

The Edinburgh Logical Framework (\LF) is a general framework for
formalizing systems that are presented in a rule-based and
syntax-directed fashion \cite{harper87lics}.
\LF is based on a dependently-typed \lcalculus and, as such,
supports the higher-order abstract syntax approach to representing
binding structure. 
Because of its features, \LF has proven to be extremely versatile in
encoding formal systems such as programming languages and logics.

One of the purposes for specifying a formal system is to be
able to prove properties about it; such properties are usually called
\emph{meta-theorems}. 
There are two approaches to doing this relative to \LF
specifications. 
One of these approaches is implemented by the \Twelf system 
that is based on according a logic programming interpretation to \LF
specifications \cite{rohwedder96esop,pfenning91lf}. 
In essence, \Twelf interprets types as relations and thereby
transforms the question of validity of a relation into one about the 
inhabitation of a type. 
\Twelf also allows such relations to be moded, \ie, it lets some
arguments to be designated as inputs and others to be identified as
outputs. 
Relations thus define non-deterministic functions
from inputs to outputs.
\Twelf complements such a treatment with tools for determining if
particular relational specifications represent total functions from
ground inputs to ground outputs.  
When the relations are about types that specify particular aspects of
formal systems, something that is possible to do in the rich types
language of \LF, such verifications of totality correspond to
implicit, constructive proofs of meta-theorems.

The second approach is to proving meta-theorems is to do this
explicitly within a suitable logic. 
The \M2 logic has been described for the specific purpose of
constructing such proofs over \LF specifications
\cite{schurmann98cade,schurmann00phd}. 
Meta-theorems are represented by \M2 formulas and proved by applying
the derivation rules in \M2.
A successful proof returns a certificate in the form of a proof term
that can be checked independently by another (simpler) program. 

Of the two approaches, totality checking has proven to be vastly more
popular: a large number of verifications based on it have been
described in the literature. 
By contrast, although an automatic theorem prover based on \M2 has
been developed \cite{schurmann00phd}, it has, to our knowledge, not
been used in many reasoning tasks. 
However, there is a potential drawback to using totality checking in
verification: this process does not yield an outcome that can be
provided to a third party that can, for instance, easily check its
correctness. 

This paper begins an effort to address the abovementioned shortcoming
by developing a method for extracting an explicit proof from totality
checking. 
In this first step, we limit ourselves to a subset of the kinds of
verification treated by Twelf.
In particular, we do not consider meta-theorems that require the use
of contexts (defined via regular worlds in Twelf) and also disallow
the use of lemmas in specifications. 
In this setting, we describe and prove correct a procedure for
obtaining a proof in \M2 from the work done by \Twelf towards
establishing totality.

The rest of the paper is organized as follows.
Section \ref{sec:lf} recalls \LF and its interpretation in
\Twelf and introduces some associated terminology needed in the
paper. 
Section \ref{sec:total} describes totality checking. 
The components that make it up have been presented in different
settings (\eg, see
\cite{rohwedder96esop,schurmann03tphols,pfenning02guide}) and some
aspects (such as output coverage checking) have not been described in
the formality needed for what we do in this paper.
Thus, this section represents our attempt to describe these aspects
with in a coordinated and precise fashion. 
Section \ref{sec:m2} presents the \M2 logic. 
Section \ref{sec:genproofs} finally tackles the generation of \M2
proofs from totality checking.
Section \ref{sec:conc} concludes the paper with a brief discussion of
future directions to this work.

\section{The Edinburgh Logical Framework}\label{sec:lf}

\newcommand\lfOAbsId[1]{\lfOAbs {#1} {\ktm} {#1}}
\newcommand\ecdAbsId[1]{\ecdAbs {(\lfOAbsId {#1})}}
\newcommand\ecdAbsXX{\ecdAbsId x}
\newcommand\ecdAbsYY{\ecdAbsId y}
\newcommand\ecdAppAbsXAbsY{\ecdApp {(\ecdAbsXX)} {(\ecdAbsYY)}}

\long\def\lfrules{
  \begin{figure*}[!t]
    Valid Families:
    \begin{smallgather}
      \infer{
        \lfSeq {\G } {\Si} {a} {K}
      }{
        a:K  \in \Si
        &
        \lfCtx {\Si} {\G}
      }
      \qquad
      \infer[]{
        \lfSeq {\G } {\Si} {\lfTAbs x A B } {\ktype}
      }{
        \lfSeq {\G, x:A } {\Si} {B } {\ktype}
      }
    \end{smallgather}
    \begin{smallgather}
      \infer{
        \lfSeq {\G } {\Si} {\lfOAbs x A B } {\lfTAbs x A K}
      }{
        \lfSeq {\G, x:A } {\Si} {B } {K}
      }
      \qquad
      \infer{
        \lfSeq {\G } {\Si} {A\,M } {K[M/x]}
      }{
        \lfSeq {\G } {\Si} {A } {\lfTAbs x B K}
        &
        \lfSeq {\G } {\Si} {M } {B}
      }
    \qquad
      \infer{
        \lfSeq {\G } {\Si} {A } {K^\prime}
      }{
        \lfSeq {\G } {\Si} {A } {K}
        &
        \lfKind {\G} {\Si} {K^\prime}
        &
        \lfDefEq {\G} {\Si} K {K^\prime}
      }
    \end{smallgather}
\\
    Valid Objects:
    \begin{smallgather}
      \infer{
        \lfSeq {\G } {\Si} {c } {A}
      }{
        c:K \in \Si
        &
        \lfCtx {\Si} {\G}
      }
      \qquad
      \infer{
        \lfSeq {\G } {\Si} {x } {A}
      }{
        x:A \in \G
        &
        \lfCtx {\Si} {\G}
      }
    \end{smallgather}
    \begin{smallgather}
      \infer{
        \lfSeq {\G } {\Si} {\lfOAbs x A M } {\lfTAbs x A B}
      }{
        \lfSeq {\G, x:A } {\Si} {M } {B}
      }
      \qquad
      \infer{
        \lfSeq {\G } {\Si} {M\,N } {B[N/x]}
      }{
        \lfSeq {\G } {\Si} {M } {\lfTAbs x A B}
        &
        \lfSeq {\G } {\Si} {N} {A}
      }
    \qquad
      \infer{
        \lfSeq {\G } {\Si} {M } {A^\prime}
      }{
        \lfSeq {\G } {\Si} {M } {A}
        &
        \lfSeq {\G } {\Si} {A^\prime } {\ktype}
        &
        \lfDefEq {\G} {\Si} A {A^\prime}
      }
    \end{smallgather}
\nocaptionrule \caption{Rules for Valid Type Families and Objects}
  \label{fig:lfrules}
  \end{figure*}
}

We provide a brief summary of the Edinburgh Logical Framework (\LF) in
this section and explain its use in encoding the syntax and derivation
rules of formal systems.
We then introduce the \Twelf system that provides \LF specifications a
logic programming interpretation, which becomes a basis for proving
meta-theoretic properties of systems encoded in \LF.
We conclude the section with the definitions of a few technical
notions like substitution and unification related to \LF that will be
needed in later parts of this paper.

\subsection{Syntax and Typing Judgments}

The syntax of \LF is that of a $\lambda$-calculus that has three
categories of expressions: kinds, type families and objects.
Kinds classify type families and type families classify objects.
The special kind for types is $\ktype$.
We use the symbol $a$ for type level constants, $c$ for object level
constants and $x$ for object-level variables.
The different categories of expressions are given by the rules below;
we use $K$, possibly with subscripts, as a schematic
variable for kinds, $A$ and $B$ for type families, $M$
and $N$ for objects, and $U$ and $V$ for either type families or objects.
\begin{align*}
  \text{Kinds}\qquad    &       K  ::= \ktype \sep \lfTAbs x A K\\
  \text{Type Families}\qquad  &  A, B  ::= a \sep \lfTAbs x A B \sep \lfApp A M \\
  \text{Objects}\qquad     &    M, N  ::= c \sep x \sep \lfOAbs x A M \sep \lfApp M N
\end{align*}
Here, $\lfOBnd$ and $\Pi$ are binding operators.
\LF is dependently typed in the sense that kinds and type
families might depend on object terms:
we have kinds of the form $\lfTAbs{x}{A}{K}$ and type
families of the form $\lfTAbs x A B$, where $\lfTBnd$ binds
occurrences of the (object) variable $x$ of type $A$ in the kind
$K$  and type family $B$, respectively.
We write $\lfTAbsa {x_1:A_1, \ldots, x_n:A_n} B$ as an abbreviation for
$\lfTAbs {x_1} {A_1} {\ldots \lfTAbs{x_n} {A_n} B}$ and similarly for
$\lfTAbs {x_1} {A_1} {\ldots \lfTAbs{x_n} {A_n} K}$.
The expression $\lfTAbs x A B$ will also be written as $A \to B$ if
the variable $x$ does not occur in $B$.
We take $\to$ to be a right associative operator.

\LF terms are constructed relative to \emph{signatures}, denoted
schematically by $\Si$, that keep track of types
and kinds assigned to constants, and \emph{contexts}, denoted
schematically by $\G$, that keep track of types assigned to variables.
Signatures and contexts are given by the following rules:
\begin{align*}
  \text{Signatures}\qquad \Si & ::= \emptyCol \sep \Si, a:K \sep \Si, c:A\\
  \text{Contexts}\qquad   \G & ::= \emptyCol \sep \G, x:A
\end{align*}
Here, $\emptyCol$ denotes the empty sequence.
Since $\lfTAbsa {x_1:A_1, \ldots, x_n:A_n} B$ has list of binders
that look like an \LF context, we will sometimes write it as $\lfTAbsa
{\G} B$ where $\G = x_1:A_1,\ldots,x_n:A_n$.

\lfrules

\LF is equipped with rules for deriving the following
judgments:
$\lfDefEq {\G} {\Si} M N$ and $\lfDefEq {\G} {\Si} A B$
for equality between objects and types;
$\lfSig {\Si}$, $\lfCtx {\Si} {\G}$ and $\lfKind {\G} {\Si} K$
representing the validity of signatures, contexts and kinds;
and $\lfSeq {\G} {\Si} A K$ and $\lfSeq {\G} {\Si} {M} {A}$
asserting the well-typedness of type families and objects.
Fig.\ref{fig:lfrules} shows the rules for the last two judgments;
the full collection can be found in \cite{harper93jacm}.
We will sometimes leave the signature implicit in
these judgments, writing them simply as $\lfSeq {\G } {} {A
} {K}$ and $\G \lfSeq {} {} {M } {A}$.
The notation $U[V/x]$ used in the rules stands for the capture
avoiding substitution of $V$ for the variable $x$ in $U$.
The equality relation $\equiv$ between type families and objects
corresponds to $\beta\eta$-conversion.
The \LF type theory guarantees that every well-typed \LF term has an
unique $\beta\eta$-long normal form called its \emph{canonical form}.
Thus, two well-typed \LF terms are equal iff their canonical
forms are identical up to a renaming of bound variables.
We will often confuse a well-typed \LF term with its
canonical form.

\subsection{\LF as a Specification Language}

\LF can be used to encode varied formal systems via the \emph{types as
  judgments}, \emph{objects as proofs} and \emph{inhabitation as
  provability} principles.
We illustrate this aspect by considering the encoding of the untyped
\lcalculus; this example will be used also in later parts of the
paper.

The object system syntax is given by the following rule:
\begin{align*}
   M ::= \lcAbs x M \sep \lcApp {M_1} {M_2}
\end{align*}
We shall represent these terms using the \LF type \ktm and
object-level constants $\kapp: \ktm \to \ktm \to \ktm$ and $\kabs:
(\ktm \to \ktm) \to \ktm$.
The  precise encoding $\encode{\cdot}$ is given as follows:
\begin{smallalign}
\encode{x} = x
\quad
\encode{\lcApp M N} = \ecdApp {\encode M} {\encode N}
\quad
\encode{\lcAbs x M} = \ecdAbs {(\lfOAbs {x} {\ktm} {\encode M})}
\end{smallalign}
Note that we are representing binding in the \lcalculus here by
binding in \LF.
For example, $\lcAbs x x$ is encoded as an LF term $\lfApp {\kabs} {(\lfOAbs
x {\ktm} x)}$.

Object system judgments are typically represented in LF via types.
Thus, consider the call-by-name evaluation relation between
the \lterms that is defined by the following rules:
\begin{smallgather}
  \infer[\kwd{evAbs}]{
    \lcEval {\lcAbs x M} {\lcAbs x M}
  }{
  }
\qquad
  \infer[\kwd{evApp}]{
\lcEval {\lcApp M N} {V}
  }{
\lcEval {M} {\lcAbs x {M^\prime}}
    &
\lcEval {M^\prime[N/x]} {V}
  }
\end{smallgather}
This relation can be represented by the \LF
type family $\keval : \ktm \to \ktm \to \ktype$: the judgment $\lcEval
M N$ would then be encoded as the \LF type $\ecdEval {\encode M}
{\encode N}$.
The derivability of a judgment $\lcEval M N$ then boils down to the
inhabitation of $\ecdEval {\encode M} {\encode N}$ in \LF.
To determine such inhabitation, we introduce the following
object-level constants into the \LF signature:
\begin{smallalign}
  &\kevapp :
  \lfTAbs M {\ktm} {\lfTAbs {M^\prime} {\ktm \to \ktm} {
      \lfTAbs N {\ktm} {\lfTAbs V {\ktm} \\
 & \quad {
    \ecdEval M {(\ecdAbs {M^\prime})}
    \to \ecdEval {(\lfApp {M^\prime} N)} V
    \to \ecdEval {(\ecdApp M N)} V
    }}}}. \\
  &\kevabs :
  \lfTAbs M {\ktm \to \ktm}
     {\ecdEval {(\ecdAbs M)} {(\ecdAbs M)}}.
\end{smallalign}
Given these constants, the object system derivation
\begin{smallgather}
  \infer[\kwd{evApp}]{
\lcEval {(\lam x.x)\, (\lam y.y)} {\lam y.y}
  }{
    \infer[\kwd{evAbs}]{
\lcEval {\lam x.x} {\lam x.x}
    }{}
    &
    \infer[\kwd{evAbs}]{
\lcEval {x[\lam y.y/x]} {\lam y.y}
    }{}
  }
\end{smallgather}
would be represented by the \LF object
\begin{smallalign}
\small
& \ecdEvAppF
    {(\ecdAbsId x)}
    {(\lfOAbsId x)}
    {(\ecdAbsId y)}\\
& \quad
    {(\ecdAbsId y)}
    {(
      \ecdEvAbsF
          {(\lfOAbsId x)}
      )}\,
    {(
      \ecdEvAbsF
          {(\lfOAbsId y)}
      )}.
\end{smallalign}
This expression has the type
\begin{smallalign}
\ecdEval {(\ecdAppAbsXAbsY)}
{(\ecdAbsYY)}
\end{smallalign}
which is the \LF encoding of
$\lcEval {\lcApp {(\lcAbs x x)} {(\lcAbs y y)}} {\lcAbs y y}$.

Suppose that our object system also encompasses typing judgments of
the form $\stlcOf {\D} M T$ where $\D = x_1:T_1,\ldots,x_n:T_n$: such a
judgment asserts that $M$ has type $T$ under the
assumption that, for $1 \leq i \leq n$, $x_i$ has type $T_i$.
Moreover, let these judgments be defined by the rules
\begin{smallgather}
\infer[\kwd{ofAbs}]{
  \stlcOf {\D} {(\lam x. M)} {T_1 \to T_2}
}{
  \stlcOf {\D, x:T_1} M T_2
}
\qquad
\infer[\kwd{ofApp}]{
  \stlcOf {\D} {\lcApp M N} T_2
}{
  \stlcOf {\D} M {T_1 \to T_2}
  &
  \stlcOf {\D} N {T_1}
}
\end{smallgather}
\noindent
with the $\kwd{ofAbs}$ rule having the side condition that $x$ does
not occur in $\D$.
We use the \LF type $\kty$ and the \LF  (object-level) constant $\karr :
\kty \to \kty \to \kty$ to represent the types of the object system:
the $\encode{\emptyCol}$ mapping is extended to \lcalculus types in such a
way that $\encode{T_1 \to T_2} = \ecdArr {\encode{T_1}}
\encode{T_2}$.
The typing rules for the \lcalculus can then be encoded in \LF using
the signature:
\begin{smallalign}
& \kof : \ktm \to \kty \to \ktype.\\
& \kofapp : \lfTAbs M {\ktm} {\lfTAbs N {\ktm}
    {\lfTAbs {T_1} {\kty} {\lfTAbs T {\kty} \\
& \qquad {
      \ecdOf M {(\ecdArr {T_1} T)}
      \to \ecdOf N {T_1}
      \to \ecdOf {(\ecdApp M N)} T
      }}}}.\\
& \kofabs : \lfTAbs M {\ktm} {\lfTAbs {T_1} {\kty} {\lfTAbs T {\kty} \\
& \qquad {
        (\lfTAbs x {\ktm} {\ecdOf x {T_1} \to \ecdOf {(\lfApp M x)} T})
        \to \ecdOf {(\ecdAbs M)} {(\ecdArr {T_1} T)}
      }}}
\end{smallalign}
We assume here that $\stlcOf {\D}  M T$
is encoded as
$\lfSeq {\encode{\D}} {} {\encode{M}} {\encode{T}}$,
where $\encode{\D}$ is an \LF context resulting from transforming every
$x_i : T_i$ in $\D$ into
$x_i:\ktm, y_i: \ecdOf {x_i} {\encode{T_i}}$.
%

\LF expressions contain a lot of verbose type information.
This situation can be eased by making some of the outermost binders in
the types assigned to object level constant implicit by using tokens
beginning with uppercase letters for the variables they bind.
Thus, the constants for encoding the evaluation and
typing rules for the \lcalculus can be shown by means of the
following signature:
\begin{smallalign}
 &\kevapp :
    \ecdEval M {(\ecdAbs {M^\prime})}
    \to \ecdEval {(\lfApp {M^\prime} N)} V
    \to \ecdEval {(\ecdApp M N)} V.\\
  &\kevabs :
    \ecdEval {(\ecdAbs M)} {(\ecdAbs M)}.\\
& \kofapp :
      \ecdOf M {(\ecdArr {T_1} T)}
      \to \ecdOf N {T_1}
      \to \ecdOf {(\ecdApp M N)} T.\\
& \kofabs :
        (\lfTAbs x {\ktm} {\ecdOf x {T_1} \to \ecdOf {(\lfApp M x)} T})
        \to \ecdOf {(\ecdAbs M)} {(\ecdArr {T_1} T)}.
\end{smallalign}
When we make the binding implicit in this way, it will always be
the case that the types for the bound variables can be uniquely inferred
by a \emph{type reconstruction} process
\cite{pientka13jfp}.
When showing terms, we will also leave out the arguments for constants
that correspond the implicitly bound variables, assuming that these
too can be inferred.
Using this convention, the \LF object for the evaluation derivation
considered earlier can be written as follows:
\begin{smallalign}
 \ecdEvApp & {(\ecdEvAbs : \ecdEval {(\ecdAbsXX)} {(\ecdAbsXX)})}\\
           & {(\ecdEvAbs : \ecdEval {(\ecdAbsYY)} {(\ecdAbsYY)})}
\end{smallalign}
The missing arguments of $\kevapp$ and the two occurrences of
$\kevabs$ will be filled in here by type reconstruction.

\subsection{\Twelf and Logic Programming in \LF}

\Twelf is a tool based on \LF that uses the representation principles
just discussed for specifying and reasoning about formal systems.
\Twelf permits $A \from B$ as an alternative syntax for $B \to A$,
treating $\from$ as a left associative operator.
An important idea underlying \Twelf is that \LF types can be given a
logic programming interpretation so that they can be executed.
The full description of the operational semantics of \Twelf can
be found in \cite{pfenning91lf}.
We consider here only the interpretation for constant
definitions of the form
\begin{align*}
  & c : \lfAppFour a {M_1} {\ldots} {M_n}
  \from \lfAppFour a {M_{1_1}} {\ldots} {M_{1_n}}
  \from \ldots
  \from \lfAppFour a {M_{m_1}} {\ldots} {M_{m_n}}
\end{align*}
where $a : \lfTAbsa {x_1:A_1,\ldots,x_n:A_n} {\ktype}$ is an \LF type family.
In the logic programming setting, this definition of $c$ is called a
\emph{clause} that has $\lfAppFour a {M_1} {\ldots} {M_n}$ as its
\emph{head} and $\lfAppFour a {M_{i_1}} {\ldots} {M_{i_n}}(1
\leq i \leq m)$ as its \emph{premises}.
We think of such a clause in the same way we would a \Prolog clause.
Thus, we call the type family $a$ a predicate and we think of the
clause as one \emph{for} $a$.
Given a set of clauses for $a$, we query \Twelf for the solutions of
\emph{goals} of the form
$M: \lfAppFour a {M_1^\prime} {\ldots} {M_n^\prime}$.
\Twelf interprets this as a question that asks if
a well-formed term of type $\lfAppFour a {M_1^\prime} {\ldots}
{M_n^\prime}$ exists, treating $M$ and the free variables in $M_1,
\ldots, M_n$ as logic variables.
\Twelf performs goal-directed proof search using backchaining
similar to \Prolog, using unification to
instantiate  logic variables as needed to
solve the inhabitation question \cite{pfenning91lf}.
We assume that the search will be based on trying the premises in left
to right order.

\subsection{Substitution and Unification in \LF}

A substitution $\si$ in \LF is a type-preserving mapping from
variables to objects that differs from the identity at only finitely
many places.
We write $\si$ as $(M_1/x_1, \ldots, M_n/x_n)$ where $M_i/x_i\,(1
\leq i \leq n)$ are the only non-identity mappings.
For any \LF term $t$, $t[\si]$ is the term obtained by
applying $\si$ in a capture avoiding way to the free variables in $t$.

Substitutions transform terms that are well-typed in one context into
ones that are well-typed in another context, something that is
asserted by the judgment $\lfSeq {\G} {\Si} {\si} {\G^\prime}$ that is
defined by the following rules:
\begin{smallalign}
  \infer[\kwd{subst-ept}]{
    \lfSeq {\G} {\Si} {\emptyCol} {\emptyCol}
  }{
  }
  \quad
  \infer[\kwd{subst-typ}]{
    \lfSeq {\G} {\Si} {(\si, M/x)} {(\Gamma^\prime, x:A)}
  }{
    \lfSeq {\G} {\Si} M {A[\si]}
    &
    \lfSeq {\G} {\Si} {\si} {\Gamma^\prime}
  }
\end{smallalign}
Given a $\sigma$ and a $\G'$, there is a unique $\G$ with the
smallest domain such that $\lfSeq {\G} {\Si} {\si} {\G'}$ is
derivable. We shall intend to pick out this $\G$ when we use this
judgment in later sections.
We will also want to use the judgment $\lfSeq {\G} {\Si} {\si} {\G'}$
when the domain of $\si$ is a subset of that of $\G'$. We will assume
in this case that the domain of $\si$ is extended with identity
substitutions to match that of $\G'$. 
If $\G$ is the context $x_1:A_1, \ldots, x_n:A_n$, we write $\G[\si]$ to
represent the context $x_1:A_1[\si], \ldots, x_n:A_n[\si]$.
The composition of substitutions $\substComp{\si} {\theta}$ is
defined as follows:
\begin{gather*}
  \substComp {\emptyCol} {\theta} = \emptyCol
  \qquad
  \substComp {(\si, M/x)} {\theta} =
  (\substComp {\si} {\theta}, M[\theta]/x)
\end{gather*}

The following lemmas are easily proved by induction on typing
derivations related to substitutions:
\begin{lemma}\label{lem:subst_comp}
If $\lfSeq {\G_2} {\Si} {\si_1} {\G_1}$ and $\lfSeq {\G_3} {\Si}
{\si_2} {\G_2}$ are derivable then so is
$\lfSeq {\G_3} {\Si} {\substComp {\si_1}{\si_2}} {\G_1[\si_2]}$.
\end{lemma}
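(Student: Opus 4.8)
The plan is to prove Lemma~\ref{lem:subst_comp} by induction on the derivation of $\lfSeq {\G_2} {\Si} {\si_1} {\G_1}$, which is driven by the structure of $\si_1$ through the two substitution-typing rules \kwd{subst-ept} and \kwd{subst-typ}. The second hypothesis, $\lfSeq {\G_3} {\Si} {\si_2} {\G_2}$, will be carried along unchanged as a side fact used to discharge the term-level typing obligations that arise in the step case.

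In the base case, $\si_1$ is the empty substitution $\emptyCol$ and by inversion on \kwd{subst-ept} we have $\G_1 = \emptyCol$. Then $\substComp {\emptyCol} {\si_2} = \emptyCol$ by the first defining equation of composition, and $\G_1[\si_2] = \emptyCol$ as well, so the desired judgment is exactly an instance of \kwd{subst-ept}. For the inductive step, suppose $\si_1 = (\si_1', M/x)$ and $\G_1 = (\G_1', x:A)$, so that the final rule applied is \kwd{subst-typ} with premises $\lfSeq {\G_2} {\Si} M {A[\si_1']}$ and $\lfSeq {\G_2} {\Si} {\si_1'} {\G_1'}$. The induction hypothesis applied to the second premise (together with $\lfSeq {\G_3} {\Si} {\si_2} {\G_2}$) gives $\lfSeq {\G_3} {\Si} {\substComp {\si_1'}{\si_2}} {\G_1'[\si_2]}$. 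To reassemble the result, I would note that $\substComp {(\si_1', M/x)} {\si_2} = (\substComp {\si_1'}{\si_2}, M[\si_2]/x)$ by definition, and that $\G_1[\si_2] = \G_1'[\si_2], x:A[\si_2]$ by the pointwise definition of applying $\si_2$ to a context. Applying \kwd{subst-typ} then requires establishing the object-typing premise $\lfSeq {\G_3} {\Si} {M[\si_2]} {A[\si_1'][\si_2]}$ together with the already-derived substitution premise.

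The main obstacle, and the step that genuinely needs the first hypothesis $\lfSeq {\G_3} {\Si} {\si_2} {\G_2}$, is the object-typing premise. Here I would appeal to the standard substitution-preserves-typing property for \LF: from $\lfSeq {\G_2} {\Si} M {A[\si_1']}$ and $\lfSeq {\G_3} {\Si} {\si_2} {\G_2}$ one obtains $\lfSeq {\G_3} {\Si} {M[\si_2]} {(A[\si_1'])[\si_2]}$. The remaining delicacy is purely equational: the type annotation demanded by \kwd{subst-typ} is $A[\substComp{\si_1'}{\si_2}]$, whereas the preservation property produces $(A[\si_1'])[\si_2]$, so I must invoke the composition law for substitution application on terms, namely $t[\substComp{\si_1'}{\si_2}] = (t[\si_1'])[\si_2]$, specialized to $t = A$. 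This identity is itself the ``easily proved by induction'' fact the lemma statement alludes to, and it is exactly the place where capture-avoidance and the well-typedness side conditions must be handled carefully; once it is in hand, \kwd{subst-typ} applies directly and closes the step case.
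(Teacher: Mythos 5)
Your proof is correct and follows exactly the route the paper intends: the paper offers no detailed argument, saying only that the lemma is ``easily proved by induction on typing derivations related to substitutions,'' and your induction on the derivation of $\lfSeq {\G_2} {\Si} {\si_1} {\G_1}$ is precisely that argument, correctly isolating the two auxiliary facts needed in the \kwd{subst-typ} case (the substitution-preservation property, which is the paper's second lemma and is provable independently, and the composition identity $t[\substComp{\si_1'}{\si_2}] = (t[\si_1'])[\si_2]$). No gaps.
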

%

\begin{lemma}
If $\lfSeq{\G} {\Si} {\si} {\G^\prime}$ and $\lfSeq {\G^\prime} {\Si}
{U} {V}$ are derivable then so is $\lfSeq {\G} {\Si} {U[\si]} {V[\si]}$.
\end{lemma}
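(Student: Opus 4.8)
The plan is to prove this by induction on the derivation of the typing judgment $\lfSeq {\G'} {\Si} {U} {V}$. Because the conversion rules for objects and families (the last rule in each group of Fig.~\ref{fig:lfrules}) make well-typedness depend on definitional equality and on kind validity, the induction cannot be carried out for the typing judgment in isolation. I would therefore strengthen the statement to a simultaneous claim over all the judgment forms involved: that $\lfSeq {\G} {\Si} {\si} {\G'}$ together with any of $\lfSeq {\G'} {\Si} {U} {V}$, $\lfDefEq {\G'} {\Si} {U} {V}$, or $\lfKind {\G'} {\Si} {K}$ yields, respectively, $\lfSeq {\G} {\Si} {U[\si]} {V[\si]}$, $\lfDefEq {\G} {\Si} {U[\si]} {V[\si]}$, or $\lfKind {\G} {\Si} {K[\si]}$. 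A single structural induction over this combined set of derivations then lets the conversion cases appeal directly to the inductive hypotheses for equality and kind validity.

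The heart of the argument is the variable case (rule $x{:}A \in \G'$): here $U[\si] = \si(x)$, and inverting the derivation of $\lfSeq {\G} {\Si} {\si} {\G'}$ through the $\kwd{subst-typ}$ rule delivers exactly $\lfSeq {\G} {\Si} {\si(x)} {A[\si]}$, which is the desired conclusion. For the constant cases the entries $a{:}K$ and $c{:}A$ in $\Si$ are closed, so $K[\si]=K$ and $A[\si]=A$, and one only needs $\lfCtx {\Si} {\G}$, which I would obtain from a small presupposition lemma stating that $\lfSeq {\G} {\Si} {\si} {\G'}$ implies $\lfCtx {\Si} {\G}$. The application cases (rules for $A\,M$ and $M\,N$) rely on a raw-term substitution-commutation identity, namely $t[N/x][\si] = t[(\si, N[\si]/x)]$ for $x$ outside the domain and range of $\si$; applying this to the result type $B[N/x]$ reconciles the type produced by re-applying the application rule in $\G$ with the substituted type $B[N/x][\si]$. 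This identity is a consequence of the definition of composition and Lemma~\ref{lem:subst_comp}. Finally, the two conversion cases close immediately using the inductive hypotheses for the equality and kind-validity judgments.

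I expect the binder cases (the $\Pi$- and $\lambda$-introduction rules) to be the main obstacle. Each requires extending the substitution across a new binder: from $\lfSeq {\G} {\Si} {\si} {\G'}$ one must produce $\lfSeq {\G, x{:}A[\si]} {\Si} {(\si, x/x)} {\G', x{:}A}$, which is a weakening-plus-identity-extension sublemma for the substitution typing judgment, and then match $(\lfTAbs x A B)[\si]$ with $\lfTAbs x {A[\si]} {B[(\si, x/x)]}$ under the capture-avoidance discipline so that the inductive hypothesis applies. Establishing this weakening property cleanly, and keeping the bound variable $x$ distinct from the domain and range of $\si$ throughout, is where the bookkeeping is most delicate; the remaining cases are routine once the weakening sublemma and the commutation identity are in hand.
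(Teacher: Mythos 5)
Your plan is correct and follows essentially the same approach as the paper, which dispatches this lemma with the single remark that it is ``easily proved by induction on typing derivations related to substitutions''---precisely the induction you carry out. The details you supply (mutual induction with the equality and kind-validity judgments to handle the conversion rules, inversion of \kwd{subst-typ} in the variable case, the substitution-commutation identity for applications, and the identity-extension sublemma for binders) are the standard bookkeeping the paper leaves implicit, and they are all sound.
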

%

A unification problem $S$ is a finite multiset of
equations $\{t_i = s_i\, |\, 1 \leq i \leq n\}$ where, for $1 \leq i
\leq n$, $t_i$ and $s_i$ are \LF terms of the same kind or type.
A substitution $\si$ such that
$t_i[\si] = s_i[\si]$ for all $i$ such that $1 \leq i \leq n$ is a
unifier for $S$.
It is a most general unifer (\emph{mgu}) if for
any unifer $\theta$ of $S$ there exists a substitution
$\gamma$ such that for any term $t$ we have $t[\theta] =
t[\si][\gamma]$.
Not every unification problem in \LF has an mgu and unifiability is
also not decidable in general.
However, these properties hold when all occurrences of free variables
in the terms determining the unification problem are strict as per the
following definition~\cite{pfenning91lics}.
\begin{definition}
An occurrence of a free variable is strict if it is not in the
argument of a free variable and all its arguments are distinct bound
variables.
\end{definition}
\noindent%
We will also be interested often in matching, \ie, unification where
free variables occur in the terms on only one side of the
equations.
In this case decidability and the existence of most general solutions
follows if every free variable has \emph{at least one} strict
occurrence in the terms.
We refer to a term that satisfies this property as a \emph{strict term}.

Given a context $\G = x_1:A_1, \ldots, x_n:A_n$ we will often write
$\lfApp M \G$ to mean $\lfAppFour M {x_1} {\ldots} {x_n}$.
Similarly, given a substitution $\si = (M_1/x_1, \ldots, M_n/x_n)$ we
will write  $\lfApp M {\si}$ to mean $\lfAppFour M {M_1} {\ldots} {M_n}$.

\section{Totality Checking in Twelf}\label{sec:total}


%
Under the logic programming interpretation of \LF specifications in
\Twelf, a type family $a:\lfTAbsa {\G} {\ktype}$ represents a relation
between its arguments.
This relation can be read as a meta-theorem about the system specified
by interpreting particular arguments as inputs and others as outputs.
For instance, the \emph{subject reduction} theorem for the
\lcalculus (\STLC) states that the evaluation preserves types: for all
terms $E$ and $V$ and types $T$, if $\lcEval E V$ and
$\stlcOf {\emptyCol} E T$ hold, then $\stlcOf {\emptyCol} V T$ holds.
Based on our encoding of evaluation and typing for the \lcalculus, we
can define the following type family:
\begin{align*}
&\ksubred : \ecdEval E V \to \ecdOf E T \to \ecdOf V T \to \ktype.
\end{align*}
By assigning $E$, $V$, $T$ and the first two arguments to \ksubred as
inputs and the third argument to \ksubred as output, the proof search
for $\ecdSubred {D_1} {D_2} {D_3}$ becomes directional:
it queries the existence of a derivation $D_3$ for $\ecdOf V T$
given derivations $D_1$ for $\ecdEval E V$ and $D_2$ for $\ecdOf E T$.
Thus, we can interpret \ksubred operationally as a
non-deterministic function that computes a ground output $D_3$ from
ground inputs $E,V,T,D_1,D_2$.
If we can show that this function is total, \ie, that, given any
ground terms for the inputs, proof search will be able to find a
satisfying ground term for the output in a finite number of steps,
then we would have obtained a constructive proof for the subject
reduction theorem.

\Twelf uses the above approach to interpret meta-theorems as
\emph{totality assertions} and their proofs as \emph{totality
  checking}.
In \Twelf notation,  the subject reduction theorem is expressed as
follows:
\begin{align*}
&\ksubred : \ecdEval E V \to \ecdOf E T \to \ecdOf V T \to \ktype.\\
&\kmode\; \ecdSubred {\inMod{D_1}} {\inMod{D_2}} {\outMod{D_3}}.\\
&\ktotal\; D\; (\ecdSubred D {\_} {\_}).
\end{align*}
The mode declaration that begins with \kmode designates the explicit
arguments prefixed by \inMod{} as inputs and those prefixed by
\outMod{} as outputs; such a designation must be extended to also
include the implicit arguments.
The declaration that begins with \ktotal asserts that
\ksubred represents a total non-deterministic function in the
indicated mode.
This declaration also identifies an argument on which to base a
termination argument as we shall see presently.

\begin{figure}[!t]
\begin{smallalign}
&\kwd{sr-app} :
  \ecdSubred
      {(\ecdEvApp {Dev_1} {Dev_2})}
      {(\ecdOfApp {Dty_1} {Dty_2})}
      {Dty}\\
&\qquad \from
      \ecdSubred
          {(Dev_1 : \ecdEval M {(\ecdAbs {M^\prime})})} {} {}\\
&\qquad\qquad\qquad
          {(Dty_1 : \ecdOf M {(\ecdArr {T_1} T_2)})}\\
&\qquad\qquad\qquad
          {(\ecdOfAbs
            {(Dty_3 : \lfTAbs x {\ktm}
              {\ecdOf x {T_1} \to \ecdOf {(\lfApp {M^\prime} x)} T_2})})}\\
&\qquad \from
      \ecdSubred
          {(Dev_2 : \ecdEval {(\lfApp {M^\prime} N)} V)} {} {}\\
&\qquad\qquad\qquad
          {(\lfAppTri {Dty_3} {N}
            {(Dty_2 : \ecdOf N {T_1})})}\\
&\qquad\qquad\qquad
          {(Dty : \ecdOf V T)}.\\
&\kwd{sr-abs} : \ecdSubred {\kevabs} {Dty} {Dty}.
\end{smallalign}
\nocaptionrule \caption{\LF signature for \ksubred}
  \label{fig:subred}
\end{figure}

To facilitate totality checking, the user must provide clauses for
deriving typing judgments of the kind in question.
For \ksubred, these clauses might be the ones shown in
Fig~\ref{fig:subred}.
Observe that these clauses essentially describe a recursive method for
constructing a derivation of the output type given ones for the input
types; the object constants \kwd{sr-app} and \kwd{sr-abs} are used to
encode these constructions.

Totality checking in \Twelf is broken into \emph{mode checking},
\emph{termination checking}, \emph{input coverage checking} and
\emph{output coverage checking}; it can be shown that the successful
completion of each of these checks ensures totality of the
non-deterministic function.
Our interest is in extracting a proof from \Twelf's verification
process.
For this, we only need to know the structure of each of the checks and
we elide a discussion of how they guarantee totality.


\subsection{Mode Checking}\label{sec:mode}


Given a type family
$a: \lfTAbsa {x_1:A_1,\ldots,x_n:A_n} {\ktype}$,
we shall refer to the variables $x_1,\ldots,x_n$ as its
\emph{parameters}.
A mode declaration assigns \emph{polarities} $p_1,\ldots,p_n$ to these
parameters, where $p_i$ is either a positive polarity \inMod{} that
designates $x_i$ to be an \emph{input parameter} or a negative polarity
\outMod{} that designates $x_i$ to be an \emph{output parameter}.
A mode declaration for $a$ is well-defined if for any $i$, $1\leq i
\leq n$, such that $p_i = \inMod{}$, the parameters occurring in $A_i$
have polarity $\inMod{}$.
Thus, the input parameters in a type family with a well-defined mode
never depend on its output parameters; this property is necessary for
assigning a meta-theorem reading to the moded type family.
The binder of a type family with a well-defined mode can always be
rearranged so that it has the form
$a:\lfTAbsa {\G^I} {\lfTAbsa {\G^O} {\ktype}}$,
where $\G^I$ and $\G^O$ contain parameters that are assigned only
positive and negative polarities, respectively.
In the following discussion, we will consider only type families with
well-defined modes whose binders also have this special form.
A type family $a$ with a well-defined mode represents the meta-theorem that given any
ground terms $\bar{r}$ for the parameters in $\G^I$, there exists a
derivation $D:\lfAppTri a {\bar{r}} {\bar{s}}$ that computes ground terms
$\bar{s}$ for the parameters in $\G^O$.
We write this meta-theorem formally as
$\mFall {\G^I}
        {\mExst {\G^O}
          {\mExst {D:\lfAppTri a {\G^I} {\G^O}} {\mTrue}}}$.
The requirement that input parameters must not depend on output
parameters provides us a means for extending polarity assignments for
explicit parameters to cover also the implicit parameters.
For example, consider the mode declaration $\kmode\; \ecdSubred
{\inMod{D_1}} {\inMod{D_2}} {\outMod{D_3}}$ that designates the first
two explicit parameters for $\ksubred$ as input the last as output.
The elaborated kind of \ksubred is:
\begin{smallalign}
  \lfTAbs E {\ktm} {
    \lfTAbs V {\ktm} {
      \lfTAbs T {\kty} {
        \ecdEval E V \to \ecdOf E T \to \ecdOf V T \to \ktype
        }}}
\end{smallalign}
Since the implicit parameters $E$, $V$, and $T$ occur in the types of
the first two explicit parameters, well-definedness of mode dictates that they
have the polarity $\inMod{}$.
Correspondingly, the meta-theorem represented by \ksubred is
\begin{smallalign}
\mFall {E:\ktm, V:\ktm, T:\kty, D_1:\ecdEval E V, D_2:\ecdOf E T}\\
\qquad {\mExst {D_3:\ecdOf V T} {\mExst {D:\lfAppSeven {\ksubred} E V
      T {D_1} {D_2} {D_3}} {\mTrue}}}.
\end{smallalign}

Given a type family $a$ with a well-defined mode, mode checking
verifies that the clauses for $a$ that represent a proof for the
relevant meta-theorem respect the moding, \ie, that, given ground
terms for the input parameters, if backchaining on the clause
succeeds, then it will result in ground terms being produced for the
output parameters.
To formalize mode checking, we need the definitions of groundedness
with respect to a context and of input and output
consistency \cite{rohwedder96esop}.
\begin{definition}
An \LF term $M$ is ground with respect to an \LF context $\G$ if all
variables in the canonical form of $M$ are bound in $\G$.
\end{definition}
\noindent Observe that ground terms are a special instance of this
definition, \ie, they are terms that are ground with respect to the
empty context.
\begin{definition}
Let $a : \lfTAbsa {x_1:A_1,\ldots,x_n:A_n} {\ktype}$ be a type family
and let $p_1,\ldots,p_n$ be a well-defined mode for $a$.
We say that
$\lfAppFour a {M_1} {\ldots} {M_n}$
is input consistent relative to $\G$ if for all $i$ such that $1 \leq
i \leq n$ and $p_i = \inMod{}$ it is the case that $M_i$ is ground
with respect to
$\G$.
Similarly,
$\lfAppFour a {M_1} {\ldots} {M_n}$
is output consistent relative to $\G$ if for all $i$ such that $1 \leq
i \leq n$ and $p_i = \outMod{}$ it is the case that $M_i$ is ground
with respect to $\G$.

\end{definition}

A term that $a$ is applied to is called an \emph{input argument} or an
\emph{output argument}, depending on whether it corresponds to an input
parameter or an output parameter.
Variables occurring in input arguments (output arguments) are called
\emph{input variables} (resp. \emph{output variables}).
\begin{definition}\label{def:mode_csst}
Let
$a : \lfTAbsa {\G^I} {\lfTAbsa {\G^O} {\ktype}}$
be a type family with a well-defined mode, where
$\G^I$ is $x_1:A_1^\prime,\ldots,x_k:A_k^\prime$
and $\G^O$ is $x_{k+1}:A_{k+1}^\prime,..,x_n:A_n^\prime$.
Let
$c : A \from A_1 \from \cdots \from A_m$
be a clause for $a$ where
$A = \lfAppFour a {M_1} {\ldots} {M_n}$
and
$A_i = \lfAppFour a {M_{i_1}} {\ldots} {M_{i_n}}$
for $1 \leq i \leq m$.
Let $\G_0$ be the context containing only the input variables that
have a strict occurrence in $A$ and for $1 \leq i \leq m$ let
$\G_i$ be the context $\G_{i-1}, \G_{i}^\prime$, where $\G_i^\prime$
contains only the output variables that have a strict occurrence
in $A_i$.
The clause $c$ is mode consistent if, for $1 \leq i \leq m$, $A_i$
is input consistent relative to  $\G_{i-1}$ and $A$ is output
consistent relative to $\G_m$.
A type family $a$ is well-moded if it has a well-defined mode
and every clause for $a$ is mode consistent.
An \LF signature is well-moded if every type family in it is
well-moded.
\end{definition}
\noindent The restriction to only strict occurrences in the above
definition is based on the fact that the instantiations of
only variables that have such occurrences are guaranteed to be ground
when matching with a ground term.

As an example of the application of these definitions, it is easy to
see that the mode provided for \ksubred is well-defined and
that \kwd{sr-app} and \kwd{sr-abs} are mode consistent.
Thus the type family \ksubred and the \LF signature in
Fig. \ref{fig:subred} are well-moded.
The definition of mode consistency formalizes what is determined by
the mode checking algorithm described in the \Twelf manual
\cite{pfenning02guide}.

\subsection{Termination Checking}\label{sec:terminate}

Termination checking verifies that, given a well-moded \LF signature
and a goal whose input arguments are ground, backchaining on any clause
will result in a finite computation.
Termination is checked using a \emph{termination ordering}.
The fundamental termination ordering used in Twelf is the
\emph{subterm ordering}: $M \subtermOrder N$ if $M$ is a
strict subterm of $N$.
%
%
For example, the totality declaration
$\ktotal\; D\; (\ecdSubred D {\_} {\_})$
tells \Twelf to verify termination of \ksubred using the subterm
ordering on its first argument that has type $\ecdEval E V$.
%
Other orderings such as \emph{lexicographical ordering} and
\emph{simultaneous ordering} are also supported \cite{pfenning02guide}.
Termination checking assumes that the input arguments involved in
termination ordering are ground, a condition that must hold if the
signature is well-moded and mode consistent and the input arguments of
the original goal are all ground.
For every clause, it checks that the input arguments of premises involved
in termination ordering are smaller than corresponding inputs in the
clause head.
For instance, the two premises in \kwd{sr-app} have $Dev_1$ and
$Dev_2$, respectively, as their first (explicit) argument.
These are strict subterms of the input $\ecdEvApp {De_1} {De_2}$ in the
head of \kwd{sr-app}.

The following theorem is proved in \cite{rohwedder96esop}.
\begin{theorem}
Given a well-moded and termination-checked \LF signature $\Si$,
every execution path for a well-typed and input consistent goal $A$
will have only finitely many steps.
\end{theorem}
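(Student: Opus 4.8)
The plan is to prove finiteness by exhibiting a measure on \Twelf's execution states that strictly decreases at every backchaining step and takes values in a well-founded order, so that no execution path can contain infinitely many steps. I would model an execution state as the (finite) multiset of pending atomic goals, and take the measure of a single atomic goal to be the tuple of its input arguments that participate in the declared termination ordering, compared by that ordering $\subtermOrder$ (the subterm ordering in the basic case, or its lexicographic or simultaneous extension). The measure of a state is then the multiset of the measures of its pending goals, compared in the Dershowitz--Manna multiset extension of $\subtermOrder$.

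First I would record the groundness invariant. Starting from the well-typed, input-consistent goal $A$, I claim that every atomic goal reached along an execution path again has ground input arguments; in particular its termination-relevant inputs are ground. This is exactly the property noted in the discussion of termination checking: it holds because $\Si$ is well-moded and mode consistent and the inputs of $A$ are ground. Concretely, backchaining on a clause $c : A' \from A_1 \from \cdots \from A_m$ against a goal with ground inputs matches the head $A'$ against the goal; because the relevant free variables of $A'$ occur strictly, this matching problem has a most general unifier $\theta$ (existence and decidability of mgus for strict terms was established earlier), and $\theta$ grounds every input variable of $A'$ that has a strict occurrence. Mode consistency then propagates groundness of inputs through the premises $A_1, \ldots, A_m$ as each is solved, so that every recursive call again satisfies input consistency. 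I would establish this invariant together with the companion fact that a successful solution of a goal with ground inputs returns ground output arguments; the two are proved by a simultaneous induction following the clause structure recorded in Definition~\ref{def:mode_csst}.

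Second I would establish the strict decrease. Termination checking guarantees, for each clause $c$, that every premise's termination-relevant input is strictly smaller than the corresponding input of the head $A'$ in the ordering $\subtermOrder$. Since $\subtermOrder$ is preserved under substitution (a strict subterm of $N$ remains a strict subterm of $N[\theta]$, and likewise for the lexicographic and simultaneous extensions), applying the matching substitution $\theta$ preserves this inequality; and because $A'[\theta]$ has the same ground inputs as the goal it was matched against, each premise $A_i[\theta]$ has a termination measure strictly below that of the goal. Thus one backchaining step removes one goal from the pending multiset and inserts finitely many goals each of strictly smaller measure, which is precisely a strict decrease in the multiset extension of $\subtermOrder$.

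Finally, since $\subtermOrder$ is well-founded on ground terms (each strict-subterm step lowers term size, and lexicographic and simultaneous combinations of well-founded orders are again well-founded), its multiset extension is well-founded as well. A strictly descending chain in a well-founded order is finite, so the sequence of execution states along any path is finite, i.e.\ the path has only finitely many steps. I expect the main obstacle to be the groundness invariant of the first step rather than the decrease itself: keeping the inputs of every recursive call ground forces me to track how solving one premise grounds the outputs that feed the inputs of later premises, so the mode-soundness fact and the termination measure must be carried through a single induction on the derivation rather than proved independently.
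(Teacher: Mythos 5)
The paper does not actually prove this theorem: it is imported wholesale from the cited reference \cite{rohwedder96esop}, so there is no in-paper proof to compare yours against. Judged on its own, your sketch is a faithful reconstruction of the standard argument and identifies the right ingredients: the groundness invariant supplied by well-modedness, the strict decrease of termination-relevant inputs supplied by termination checking, stability of the ordering under substitution, and well-foundedness of the (extended) subterm ordering on ground terms. The one place where your particular packaging is shakier than the cited proof is the multiset-of-pending-goals measure. When you backchain on $c : A' \from A_1 \from \cdots \from A_m$, the later premises $A_2,\ldots,A_m$ enter the pending multiset \emph{before} their input arguments are ground --- they become ground only as earlier premises are solved and their outputs are propagated --- so at the moment of the step the measures of the inserted elements are not yet defined, and the Dershowitz--Manna decrease cannot be read off locally. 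You can repair this (the eventual ground instance of each $A_i$'s inputs is still strictly below the head's already-ground inputs, since the ordering is preserved under the later substitutions), but the cleaner route, and the one the cited proof effectively takes, is a well-founded induction on the measure of a single goal combined with an inner induction over the sequence of premises, carrying the mode-soundness invariant (ground inputs in, ground outputs out) through that same induction. You flag exactly this obstacle in your last sentence, so the gap is one of formulation rather than of substance.
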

%
%
By the theorem if a type family $a$ passes mode checking and
termination checking, then any call to $a$ with ground inputs will
terminate.
This does not, however, guarantee that $a$ can be interpreted as a
total function from its ground inputs to its ground outputs: for
some ground inputs the execution might not be able to make progress.
For this stronger guarantee, it must pass the input and output
coverage checking.
%

\subsection{Input Coverage Checking}\label{sec:gencover}
Coverage checking is the general problem of deciding whether any
closed term is an instance of at least one of a given set of
patterns \cite{schurmann03tphols}.
For example, in functional programming languages such as \ML, coverage
checking is used to decide if a function definition is exhaustive over
a given data type.
The complexity of coverage checking depends on the underlying term
algebra.
In \ML which only involves simple types and prefix polymorphism, the
coverage checking is simple.
Coverage checking is significantly more complex in \LF since it
contains dependent types and we also have to consider patterns with
variables of higher-order type.

A coverage goal and a coverage pattern in \LF are both given by a term
and a context with respect to which it is ground.
\begin{definition}
A coverage goal or pattern is a valid \LF typing judgment
$\lfSeq {\G} {\Si} U V$,
where $U$ is either an object or type and $V$ is either a type or
kind.
\end{definition}
A coverage goal or pattern represents the collection of closed terms
obtained by instantiating the variables in the context.
Given a set of coverage patterns and a coverage goal, the task, as
already noted, is to determine if every instance of the goal is an
instance of one of the patterns.
One possibility is that the goal is \emph{immediately covered} by one
of the patterns in the set.
\begin{definition}
A coverage goal
$\lfSeq {\G} {\Si} U V$
is immediately covered by a pattern
$\lfSeq {\G^\prime} {\Si} {U^\prime} {V^\prime}$
if there exists a substitution
$\lfSeq {\G} {\Si} {\sigma} {\G^\prime}$
such that
$\lfSeq {\G} {\Si} {U^\prime[\sigma]} {V^\prime[\sigma]}$
and
$U \equiv U^\prime[\sigma], V \equiv V^\prime[\sigma]$.
A coverage goal is immediately covered by a finite set of patterns if
it is immediately covered by one of the patterns in the set.
\end{definition}
In the more general case, different instances of the coverage goal may
be covered by different patterns.
\begin{definition}\label{def:cover}
$\lfSeq {\G} {\Si} U V$
is covered by a set of patterns $\cal P$
if, for every (ground) substitution $\sigma$ such that
$\lfSeq {\emptyCol} {\Si} {\si} {\G}$,
it is the case that
$\lfSeq {\emptyCol} {\Si} {U[\si]} {V[\si]}$
is immediately covered by ${\cal P}$.
\end{definition}

If it can be shown that the given coverage goal is immediately covered
by the given set of patterns, then the task of coverage checking is
obviously done.
If the goal is not immediately covered by the patterns, then we
consider applying a \emph{splitting} operation to the goal.
Splitting uses knowledge of the signature to generate a set of
subgoals whose simultaneous coverage implies coverage of the original
goal.
We consider here only a restricted form of the operation defined in
\cite{schurmann03tphols} that disallows splitting on variables of
function type and hence the use of context definitions (regular
worlds) in Twelf proofs.
\begin{definition}
Let
$\lfSeq {\G} {\Si} U V$
be a coverage goal and let $\G$ be
$\G_1, x:A_x, \G_2$
where $A_x$ is an atomic type.
Suppose that for every constant
$c:\lfTAbsa {\G_c} {A_c}$
in $\Si$ it is the case that either the unification problem $(A_x =
A_c, x = \lfApp c {\G_c})$ does not have a solution or it has an mgu.
Splitting is then applicable in this case and it generates the
following set of subgoals:
\begin{align*}
& \{
  \lfSeq {\G^\prime, \G_2[\sigma]} {\Si} {U[\sigma]} {V[\sigma]} \sep\\
& \qquad
  c:\lfTAbsa {\G_c} {A_c}\in \Si, (A_x = A_c, x= c\, \G_c) \mbox{ is
    unifiable}\\
& \qquad \mbox{and } \sigma \mbox{ is its mgu} \mbox{ and }
  \lfSeq {\G^\prime} {\Si} {\sigma} {(\G_1,x:A_x,\G_c)}
  \}.
\end{align*}
\end{definition}
%

Note that splitting is finitary, \ie, its application produces only
finitely many subgoals from a given goal. The following theorem,
proved in \cite{schurmann03tphols}, shows that splitting preserves the
coverage:
\begin{theorem}\label{thm:split_prsv}
Let
$\lfSeq {\G} {\Si} U V$
be a coverage goal and let
$\mathset {\lfSeq {\G_i} {\Si} {U_i} {V_i} \sep 1 \leq i \leq n}$
result from it by splitting.
$\lfSeq {\G} {\Si} U V$
is covered by a set of patterns iff
$\lfSeq {\G_i} {\Si} {U_i} {V_i}$
is covered for i such that $1 \leq i \leq n$.
\end{theorem}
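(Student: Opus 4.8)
The plan is to prove the two directions of the biconditional by setting up a correspondence between the ground instances of the original goal and the ground instances of the subgoals produced by splitting. Recall that coverage quantifies over all ground substitutions and asks for immediate coverage of the resulting closed judgment, and that immediate coverage of $\lfSeq {\emptyCol} {\Si} {U[\theta]} {V[\theta]}$ depends only on this concrete closed judgment. Hence it suffices to show that the closed judgments $\lfSeq {\emptyCol} {\Si} {U[\theta]} {V[\theta]}$ obtained as $\theta$ ranges over ground substitutions $\lfSeq {\emptyCol} {\Si} {\theta} {\G}$ form exactly the same collection as the closed instances of the subgoals $\lfSeq {\G_i} {\Si} {U_i} {V_i}$ for $1 \leq i \leq n$; immediate coverage of one instance by ${\cal P}$ then transfers to the matching instance on the other side, and both directions follow at once.

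The direction from coverage of the original goal to coverage of each subgoal is the routine one. I fix a subgoal arising from a constant $c:\lfTAbsa {\G_c} {A_c}$ with mgu $\sigma$, so that its context is $\G^\prime, \G_2[\sigma]$ and its judgment is $\lfSeq {\G^\prime, \G_2[\sigma]} {\Si} {U[\sigma]} {V[\sigma]}$, and I take an arbitrary ground $\gamma$ over this context. Forming the composite $\substComp {\sigma} {\gamma}$ and restricting it to $\G$ yields, by Lemma~\ref{lem:subst_comp}, a well-typed ground substitution $\lfSeq {\emptyCol} {\Si} {\theta} {\G}$; since the free variables of $U$ and $V$ all lie in $\G$ we have $U[\sigma][\gamma] = U[\theta]$ and $V[\sigma][\gamma] = V[\theta]$. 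Coverage of the original goal makes this closed instance immediately covered by ${\cal P}$, which is precisely the requirement for the subgoal, and as $\gamma$ was arbitrary the subgoal is covered.

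The reverse direction carries the substance of the argument. I fix a ground substitution $\lfSeq {\emptyCol} {\Si} {\theta} {\G}$ and must locate a subgoal together with a ground instance of it that yields the same closed judgment. The crucial observation is that $x[\theta]$ is a closed canonical \LF term of the atomic type $A_x[\theta]$ and is therefore headed by an object constant: it has the shape $\lfApp c {\vec{N}}$ for some $c:\lfTAbsa {\G_c} {A_c}$ in $\Si$, with $\vec{N}$ instantiating $\G_c$. Assembling $\theta$ on $\G_1$ and $x$ together with the substitution $\tau$ sending $\G_c$ to $\vec{N}$ gives a ground unifier $\theta^\prime$ of the splitting problem $(A_x = A_c,\ x = \lfApp c {\G_c})$; since splitting is assumed applicable, this solvable problem has an mgu $\sigma$, so the subgoal for $c$ is among those generated. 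The mgu property supplies a residual $\gamma$ with $t[\theta^\prime] = t[\sigma][\gamma]$ for every $t$, and combining $\gamma$ on $\G^\prime$ with the action of $\theta$ on $\G_2$ produces a ground substitution over $\G^\prime, \G_2[\sigma]$ whose induced instance of the subgoal is exactly $\lfSeq {\emptyCol} {\Si} {U[\theta]} {V[\theta]}$. Coverage of that subgoal then delivers immediate coverage of the original instance.

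The main obstacle I expect is the substitution and context bookkeeping in this reverse direction. I have to verify that the unifier $\theta^\prime$ assembled from $\theta$ and $\tau$ is genuinely well-typed into $\G_1, x:A_x, \G_c$, reconciling the dependencies of $A_x$ and $A_c$ through the unification equations, and then that the residual $\gamma$, once extended by $\theta$ on $\G_2$, is well-typed over $\G^\prime, \G_2[\sigma]$ and reproduces $U[\theta]$ and $V[\theta]$ exactly. Both checks hinge on the fact that $\theta^\prime$ agrees with $\theta$ on all of $\G_1$ and on $x$, so that the types appearing in $\G_2[\sigma]$ receive the intended ground instances; I anticipate discharging them with the substitution lemmas recorded earlier and the standard characterization of canonical forms of atomic type.
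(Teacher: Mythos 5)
The paper does not actually prove this theorem: it is quoted from the cited coverage-checking work of Sch\"urmann and Pfenning, so there is no in-paper argument to compare yours against. Your proof is correct and is essentially the standard argument behind the cited result: establish that the closed instances of $\lfSeq {\G} {\Si} U V$ coincide with the closed instances of the subgoals, with the head constant of the canonical closed term $x[\theta]$ (which must be a constant, since the type is atomic and the term is closed) selecting which subgoal a given instance factors through, and the mgu property supplying the residual substitution. The one point I would urge you to discharge explicitly is the groundness of the residual $\gamma$ over $\G^\prime$: the mgu property only constrains $\gamma$ on variables occurring in the range of $\sigma$, so you should either argue that the mgu can be chosen so that $\G^\prime$ contains exactly those variables, or note that any remaining variables of $\G^\prime$ can be instantiated arbitrarily by closed terms of the appropriate types. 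With that and the well-typedness bookkeeping you already flag, the argument is complete.
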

%

%
The general notion of coverage checking described here is specialized
in \Twelf to input coverage checking that verifies
that there is some clause to backchain on for any call to $a$
with ground inputs.
The coverage patterns and goal to be used in the framework of general
coverage checking for this purpose are defined as follows.
\begin{definition}
For a given type family
$a:\lfTAbsa {\G^I} {\lfTAbsa {\G^O} {\ktype}}$
in a well-moded \LF signature $\Si$,  the input coverage
goal is $\G^I \lfseq a\, \G^I : \Pi\G^O.  \ktype$.
Corresponding to a clause $c$ for $a$ whose head is $a\, M_1\, \ldots\, M_n$,
let $\si^I_c$ be the substitution $(M_1/x_1,\ldots,M_k/x_k)$ where $k$
is the length of $\G^I$ and let $\G^I_c$ be the context containing the
variables occurring in $M_1,\ldots,M_k$.
The set of input patterns for $a$ relative to $\Si$ then consists of
the following:
\begin{align*}
\mathset{
  \lfSeq {\G^I_c} {\Si} {\lfApp a {\si^I_c}}
         {\lfTAbsa {\G^O[\si^I_c]} {\ktype}} \sep
  \mbox{$c$ is a clause for $a$ in $\Si$}
}.
\end{align*}
\end{definition}
The input coverage patterns for $a$ are derived from the clauses for $a$
by fixing the input arguments to $a$ to be those for the head of the
respective clause.
Moreover, these arguments in the input coverage goal are set to be
variables.
Thus, if coverage checking as per Definition \ref{def:cover} verifies
that this goal is covered by the corresponding patterns, then any
call to $a$ with ground input arguments and variables for the output
arguments will match the head of at least one clause for $a$, thus
guaranteeing the possibility of backchaining on it.
%

A terminating procedure for (input) coverage checking that is based on a
repeated use of splitting and immediate coverage is described for \LF
in \cite{schurmann03tphols}.
We do not present this procedure here since our interest is in the
extraction of an explicit proof from the results of totality
checking.
In particular, we assume that we are presented at the outset with a
sequence of splitting operations applied to an input coverage goal
that lead to a set of subgoals that pass immediate coverage checking.

\subsection{Output Coverage Checking}\label{sec:outputcover}
The premises in a clause for the type family correspond to
recursive calls.
The output arguments in such recursive calls could potentially limit
the success set.
Since coverage checking based on input coverage assumes success any
time the input arguments in a goal match those in the head of a
clause, it is necessary to verify that the output arguments of
premises do not falsify this assumption.
We describe here a method for ensuring that this is the case; this
presentation is a formalization of what we understand output coverage
checking to be in Twelf \cite{pfenning02guide}.

One way in which the success set may get constrained is if a
output variable appears in an input argument. \emph{Output freshness}
is a criterion designed to avoid such a possibility.
\begin{definition}\label{def:output_freshness}
A clause $c: A \from A_1 \from \cdots \from A_m$ satisfies the output freshness
property iff, for $1 \leq i \leq m$, the sets of output and input
variables of $A_i$ are disjoint.
\end{definition}

We assume output freshness for clauses in what follows. A more
involved requirement is that the form of the output
arguments not limit the coverage of the clause. To formalize this we
first identify output coverage goals and patterns.
\begin{definition}\label{def:out_cover}
Let
$a:\lfTAbsa {\G^I} {\lfTAbsa {\G^O} {\ktype}}$
be a type family in a well-moded \LF signature $\Si$, where
$\G^I$ is $x_1:A_1^\prime,\ldots,x_k:A_k^\prime$ and
$\G^O$ is $x_{k+1}:A_{k+1}^\prime, \ldots, x_n:A_n^\prime$.
Further, let
$c: A \from A_1 \from \cdots \from A_m$
be a clause in $\Si$, where
$A$ is $\lfAppFour a {M_{1}} {\ldots} {M_{n}}$
and, for $1 \leq i \leq m$,
$A_i$ is $\lfAppFour a {M_{i_1}} {\ldots} {M_{i_n}}$.
Finally, for $1 \leq i \leq m$, let
$\si^I_i$ be the substitution $(M_{i_1}/x_1,\ldots,M_{i_k}/x_k)$.
Then the output coverage pattern for $A_i$ is
$\lfSeq {\G^I_i, \G^O_i} {\Si} {A_i} {\ktype}$
where $\G^I_i$ and $\G^O_i$ are contexts formed respectively from the
input and output variables of $A_i$.
The output coverage goal for $A_i$ is
$\lfSeq {\G_i^I;\G^O[\si^I_i]} {\Si} {\lfAppTri a {\si^I_i} {\G^O}} {\ktype}$.
\end{definition}

Intuitively, the output coverage pattern corresponding to a premise
$A_i$ is $A_i$ itself.
The output coverage goal, on the other hand, is obtained by retaining
the inputs to $A_i$ while maximally generalizing its outputs.
To check if an output coverage goal is covered by the single pattern,
we may split on an output variable but a way that ensures no variable
other than the one being split on is instantiated.
%
\begin{definition}\label{def:out_splitting}
Let
$\lfSeq {\G^I ; \G^O} {\Si} A {\ktype}$
be an output coverage goal where $\G^O$ is
$\G_1, x:A_x, \G_2$
for some atomic type $A_x$.
Suppose that for every constant
$c:\lfTAbsa {\G_c} {A_c}$
in $\Si$ it is the case that either
$A_x$ is not an instance of $A_c$
or there is a most general substitution $\sigma$ such that $A_x =
A_c[\sigma]$; such a substitution must obviously not instantiate any
variable in $\G^I,\G^O$.
%
Then, output splitting generates the set of subgoals
\begin{align*}
& \{
  \lfSeq {\G^I; \G_1, \G_c^\prime, \G_2[\sigma]} {\Si} {A[\sigma]} {\ktype} \sep\\
& \qquad
  c:\lfTAbsa {\G_c} {A_c} \in \Si,  A_x \mbox{ is an instance of }
  A_c, \sigma' \mbox{ is a}\\
& \qquad \mbox{most general substitution such that }
  A_x = A_c[\sigma'],\\
& \qquad \sigma = ((c\, \G_c)[\sigma']/x,\sigma') \mbox{ and } \lfSeq {\G_c^\prime} {\Si} {\sigma'} {\G_c)}.
  \}
\end{align*}
\end{definition}
Eventually we want every output coverage subgoal that is produced by
splitting to be equivalent to the sole output coverage pattern for each premise.
Immediate output coverage captures the relevant equivalence
notion.
\begin{definition}\label{def:out_immed_cover}
An output coverage goal
$\lfSeq {\G^I; \G^O} {\Si} {M^\prime} {\ktype}$
is immediately covered by an output coverage pattern
$\lfSeq {\G^\prime} {\Si} {M} {\ktype}$
if
$M \equiv M^\prime[\sigma]$ for a substitution $\sigma$ that only
renames variables and is such that $\lfSeq {\G} {\Si} {\sigma}
{\Gamma^I,\Gamma^O}$.
\end{definition}
An output coverage goal is covered by an output coverage pattern if
every subgoal that is produced from it by some applications of output
splitting is immediately covered by the pattern.
In summary, output coverage checking ensures output freshness for
every clause in the signature and it further checks that the output
coverage goal for every premise of every clause is covered by the
corresponding output coverage pattern for the premise.

\section{Explicit Proofs for Meta-Theorems}\label{sec:m2}

\long\def\mtwobasics{
  \begin{figure}[ht!]
    \centering
    \begin{smallgather}
      \infer[\kfalll]{
        \mSeq
            {\G}
            {\D_1, \mTm {\boxed{x}} {\mFall {\G_1} {F_1}}, \D_2}
            {\mTm {\boxed{\mLet {y=\lfApp x {\si}} P}} {F_2}}
      }{
        \lfSeq {\G} {\Si} {\si} {\G_1}
        &
        \mSeq
            {\G}
            {
              \D_1,
              \mTm {\boxed{x}} {\mFall {\G_1} {F_1}},
              \D_2,
              \mTm {\boxed{y}} {F_1[\sigma]}
            }
            {\mTm {\boxed{P}} {F_2}}
      }
    \end{smallgather}
    \begin{smallgather}
      \infer[\kexistsl] {
        \mSeq
            {\G}
            {\D_1, \mTm {\boxed{x}} {\mExst {\G_1} {\mTrue}}, \D_2}
            {\mTm {\boxed{\mSplit x {\G_1} P}} F}
      }{
        \mSeq
            {\G,\G_1}
            {\D_1, \mTm {\boxed{x}} {\mExst {\G_1} {\mTrue}}, \D_2}
            {\mTm {\boxed{P}} F}
      }
    \end{smallgather}
    \begin{smallgather}
      \infer[\kfallr]{
        \mSeq {\G} {\D} {\mTm {\boxed{\mIAll {\G_1} P}} {\mFall {\G_1} F}}
      }{
        \mSeq {\G,\G_1} {\D} {\mTm {\boxed{P}} F}
      }
      \qquad
      \infer[\kexistsr]{
        \mSeq {\G} {\D}
              {\mTm {\boxed{\mSubst {\sigma}}} {\mExst {\G_1} {\mTrue}}}
      }{
        \lfSeq {\G} {\Si} {\sigma} {\G_1}
      }
    \end{smallgather}
\nocaptionrule  \caption{Quantifier Rules}
  \label{fig:m2basics}
  \end{figure}
}

\long\def\caserules{
  \begin{figure*}[ht!]
    \centering
    \begin{smallgather}
      \infer[\ksigempty] {
        \mcSeq {\G_1, x: A_x, \G_2} {\D} {\emptyCol} {\mTm {\boxed{\mNull}} F}
      }{
      }
      \qquad
      \infer[\ksignonuni, A_x \mbox{ and } A_c \mbox{ do not unify}]{
        \mcSeq
            {\G_1, x:A_x, \G_2}
            {\D}
            {\Si,c: \lfTAbsa {\G_c} {A_c}}
            {\mTm {\boxed{\O}} F}
      }{
        \mcSeq {\G_1, x:A_x, \G_2} {\D} {\Si} {\mTm {\boxed{\O}} F}
      }
    \end{smallgather}
    \begin{smallgather}
      \infer[\ksiguni]{
        \mcSeq
            {\G_1, x:A_x, \G_2}
            {\D}
            {\Si,c:\lfTAbsa {\G_c} {A_c}}
            {\mTm
              {\boxed{
                  \O,
                  (\mItem
                    {\mPattern {\G^\prime} {\G_2[\sigma]} {(\lfApp c {\G_c})[\sigma]}}
                    P)}}
              F}
      }{
        \mSeq {\G^\prime, \G_2[\sigma]} {\D[\sigma]} {\mTm {\boxed{P}} {F[\sigma]}}
        &
        \mcSeq {\G_1,x:A_x,\G_2} {\D} {\Si} {\mTm {\boxed{\O}} F}
      }
      \\
      \lfSeq {\G^\prime} {} {\sigma} {(\G_1, x:A_x, \G_c)} \text{ where }
      \sigma = mgu(A_x = A_c, x= c\, \G_c)
    \end{smallgather}
\nocaptionrule    \caption{Rules for
      $\mcSeq {\G_1, x:A_x, \G_2} {\D} {\Si} {\mTm {\boxed{\O}} F}$}
    \label{fig:case}
  \end{figure*}
}

Meta-theorems about formal systems can also be stated and proved
in a logic.
The logic \M2 is designed for doing this based on the \LF encodings
of such systems \cite{schurmann98cade}.
\M2 is a constructive logic formally presented via a sequent calculus.
Formulas in \M2 have the form $\mFall {\G_1} {\mExst {\G_2} {\mTrue}}$,
where $\G_1$ and $\G_2$ are valid \LF contexts.
The universal (existential) quantification is omitted when $\G_1$
($\G_2$) is empty.
As we can see, the formulas in \M2 are limited to the $\Pi_0^1$ form,
where all existential quantifiers follow the universal ones.
Although this may not seem very expressive, we have seen in
Section~\ref{sec:total} that every theorem proved through totality
checking has this form.

The judgments in the sequent calculus are of the form
$\mSeq {\G} {\D} {\mTm P F}$,
where $F$ is a formula, $P$ is a \emph{proof term}, $\D$ is a set of
\emph{assumptions} and $\G$ is a valid \LF context containing all free
variables occurring in $P$ and $F$.
Proof terms and assumptions are defined as follows:
\begin{align*}
  \text{Proof Terms:} & \qquad
  P ::=
  \mLet {y = \lfApp x {\si}} P \sep
  \mIAll {\G} P \sep\\
  & \qquad\qquad
  \mSplit x {\G} P \sep
  \mSubst {\sigma}\\
  \text{Assumptions:} & \qquad
  \D ::= \emptyCol \sep \D, \mTm P F
\end{align*}
A meta-theorem represented by the formula
$\mFall {\G_1} {\mExst {\G_2} {\mTrue}}$,
is proved by  deriving the judgment
$\mSeq {\emptyCol} {\emptyCol} {\mTm P {\mFall {\G_1} {\mExst {\G_2} {\mTrue}}}}$.
The proof term $P$ is obtained as an output of the
derivation.
The resulting $P$ represents a total function, as shown by the
following theorem that is proved in \cite{schurmann00phd}.
\begin{theorem}
If
$\mSeq {\emptyCol} {\emptyCol} {\mTm P {\mFall {\G_1} {\mExst {\G_2} {\mTrue}}}}$
is derivable for some $P$, then for every closed substitution
$\lfSeq {\emptyCol} {\Si} {\sigma_1} {\G_1}$
there exists a substitution
$\lfSeq {\emptyCol} {\Si} {\sigma_2} {\G_2[\sigma_1]}$.
\end{theorem}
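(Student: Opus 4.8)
The plan is to prove a stronger, semantic statement by induction on \M2 derivations and to recover the theorem as the special case in which the \LF context and the assumption set are both empty. First I would fix a notion of \emph{validity} for a \emph{closed} formula $\mFall {\G_a} {\mExst {\G_b} {\mTrue}}$: it is valid when for every closed substitution $\lfSeq {\emptyCol} {\Si} {\rho} {\G_a}$ there is a substitution $\lfSeq {\emptyCol} {\Si} {\rho'} {\G_b[\rho]}$. Since every \M2 formula has this $\Pi^1_0$ shape, with either quantifier block possibly empty, this single definition covers all formulas that arise in a derivation. The generalized claim I would then establish is: for every derivable sequent $\mSeq {\G} {\D} {\mTm P F}$ and every closed substitution $\lfSeq {\emptyCol} {\Si} {\theta} {\G}$, if $G[\theta]$ is valid for every assumption $\mTm x G$ in $\D$, then $F[\theta]$ is valid. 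The theorem is exactly the instance $\G = \D = \emptyCol$ and $\theta = \emptyCol$, for which the hypothesis on $\D$ is vacuous and the conclusion is the required totality statement.

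The induction then proceeds rule by rule, with Lemma~\ref{lem:subst_comp} on substitution composition as the workhorse throughout. For the existential-right rule the premise supplies $\lfSeq {\G} {\Si} {\sigma} {\G_1}$, so composing with $\theta$ yields $\lfSeq {\emptyCol} {\Si} {\substComp {\sigma} {\theta}} {\G_1[\theta]}$, which is precisely the witness showing that $(\mExst {\G_1} {\mTrue})[\theta]$ is valid. For the universal-right rule a closed substitution $\sigma_1$ for the freshly quantified block extends $\theta$ to a closing substitution for the premise's context $\G, \G_1$; the assumptions are unaffected because they do not mention $\G_1$, so the induction hypothesis delivers validity of the body, which is exactly what validity of $\mFall {\G_1} {F}$ under $\theta$ demands. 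For the universal-left rule I instantiate the universal assumption $\mFall {\G_1} {F_1}$, known valid by hypothesis, at the closed substitution $\substComp {\sigma} {\theta}$; the substitution-composition lemmas identify the resulting witness with validity of $F_1[\sigma]$ under $\theta$, so the newly added assumption is valid and the induction hypothesis applies to the premise. For the existential-left rule validity of the existential assumption furnishes closed witnesses for $\G_1$ that extend $\theta$; since the retained assumption is closed over $\G_1$ while the remaining assumptions and the goal $F$ do not mention $\G_1$ freely, the induction hypothesis on the premise transfers the conclusion back unchanged.

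The main obstacle is the case-analysis machinery, which I would treat by a mutually inductive statement for the judgment $\mcSeq {\G_1, x:A_x, \G_2} {\D} {\Si} {\mTm {} F}$ (its branches themselves contain \M2 sequents as premises, as seen in the unifiable-constructor rule). The intended reading is that the branches collected for the signature argument $\Si$ exhaust every way the split variable $x$ can be closed off whose head constant lies in $\Si$. Accordingly the generalized claim here is: for every closing $\theta$ of $\G_1, x:A_x, \G_2$ under which $\D[\theta]$ is valid, if the head constant of the canonical form of $x[\theta]$ occurs in $\Si$, then $F[\theta]$ is valid. The empty-signature rule makes the hypothesis vacuous; the non-unifiable rule discards a constant $c$ that cannot head $x[\theta]$ (because non-unifiable atomic types share no closed instance) and appeals to the induction hypothesis for the smaller signature; and the unifiable rule splits on whether the head of $x[\theta]$ is the chosen $c$, either deferring to the smaller signature or invoking the \M2 premise for that branch. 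At the top level the signature argument is the full signature $\Si$, so the head-in-$\Si$ hypothesis always holds, since any closed canonical term of atomic type is constructor-headed in the restricted setting without regular worlds.

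The delicate point, and the step I expect to be hardest, is the unifiable branch: I must show that the closing substitution $\theta$ \emph{factors through} the branch mgu $\sigma = mgu(A_x = A_c, x = c\,\G_c)$, i.e.\ that there is a closing $\theta'$ of $\G', \G_2[\sigma]$ with $\theta = \substComp {\sigma} {\theta'}$ (modulo the bookkeeping). Once this factorization is in hand, $\D[\sigma][\theta'] = \D[\theta]$ stays valid, the \M2 premise's induction hypothesis at $\theta'$ gives validity of $F[\sigma][\theta'] = F[\theta]$, and the branch is closed. The factorization rests entirely on \LF-theoretic facts recalled in Section~\ref{sec:lf}---uniqueness of canonical forms, constructor-headedness of closed atomic terms, and the universal property of most general unifiers for strict patterns---together with the coverage-preservation reasoning embodied in Theorem~\ref{thm:split_prsv}; thus the difficulty is concentrated in the metatheory of \LF unification and canonical forms rather than in anything specific to the \M2 calculus.
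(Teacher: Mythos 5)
Your overall strategy---a generalized validity statement quantified over closing substitutions, proved by induction on derivations, with a mutual induction for the case-analysis judgment and an mgu-factorization argument in the unifiable branch---is sensible, and note that the paper itself offers no proof to compare against: it cites Sch\"urmann's thesis, where the theorem is established by showing that well-typed proof terms are terminating realizers. However, your plan has a genuine gap: it never treats the \krecur rule, and that is precisely where a plain structural induction breaks down. The premise of \krecur is $\mSeq{\G}{\D, \mTm{x}{F}}{\mTm{P}{F}}$, so to apply your induction hypothesis to it you must first discharge the validity of the newly added assumption $F[\theta]$---which is exactly the conclusion you are trying to prove. Worse, in this restricted fragment every universal assumption reaching your \kfalll case (where you write that it is ``known valid by hypothesis'') originates from \krecur, so essentially all appeals to assumption-validity in your argument are circular. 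The side condition that $\mRecur{x}{F}{P}$ terminates in $x$ is not bookkeeping but the load-bearing hypothesis: without it the theorem is false, since a circular proof term could ``derive'' any formula of the given shape by immediately invoking its own inductive hypothesis. The standard repair, and the one taken in the cited source, is to organize the argument as a well-founded (lexicographic) induction: an outer induction on the closing substitution for the universally quantified inputs, measured by the termination ordering guaranteed by the side condition, with your structural induction on the derivation nested inside; the \kfalll instantiation of the inductive-hypothesis assumption is then licensed because, by termination, it is only ever invoked at inputs strictly smaller in that ordering.

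The rest of your outline is correct in spirit: the \kexistsr, \kfallr, and \kexistsl cases via substitution composition (Lemma~\ref{lem:subst_comp}), and the mutual induction for the case-analysis judgment, including the key factorization of a closing substitution through the branch mgu, which indeed rests on uniqueness of canonical forms, constructor-headedness of closed atomic terms, and the universal property of most general unifiers. But until the \krecur case is restructured around the termination ordering, the induction does not close, and the proof as proposed establishes nothing about sequents whose derivations use recursion---which is every sequent the paper's translation actually produces.
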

As an example of this theorem, recall from Section~\ref{sec:total} the
formula stating the subject reduction theorem:
\begin{smallalign}
\mFall {E:\ktm, V:\ktm, T:\kty, D_1:\ecdEval E V, D_2:\ecdOf E T}\\
\qquad {\mExst {D_3:\ecdOf V T} {\mExst {D:\lfAppSeven {\ksubred} E V
      T {D_1} {D_2} {D_3}} {\mTrue}}}.
\end{smallalign}
%
If we can get a proof term for this formula, then we can conclude that
for any closed terms
$D_1:\ecdEval E V$
and
$D_2:\ecdOf E T$,
there exists a term
$D_3:\ecdOf V T$.
Given the adequacy of the \LF encoding, we can conclude the subject
reduction theorem holds for the actual system.

The subsections that follow present the derivation rules for \M2.
%

\subsection{The Quantifier Rules}

\mtwobasics

The quantifier rules for \M2 that are presented in
Fig. \ref{fig:m2basics} are the most basic ones for the logic.
If we ignore the proof terms in boxes, we can see that these rules are
similar to the conventional ones for an intuitionistic logic.
The \kexistsl and \kfallr rules introduce fresh eigenvariables to the
context.
The \kfalll rule instantiates an assumption with a witnessing substitution
to get a new assumption.
\kexistsr finds a witnessing substitution and finishes the proof.
Note that weakening is implicit in $\kexistsr$.

\subsection{Recursion}

The recursion rule is
\begin{align*}
  \infer[\krecur]{
    \mSeq {\G} {\D} {\mTm{\boxed{\mRecur x F P}} F}
  }{
    \mSeq {\G} {\D, \mTm {\boxed{x}} F} {\mTm {\boxed{P}} F}
  }
\end{align*}
with the proviso that $\mRecur x F P$ must terminate in $x$.
This rule adds the goal formula as an inductive
hypothesis to the set of assumptions.
For a proof based on this rule to be valid, the proof term must
represent a terminating computation as the side condition guarantees.
This condition is presented formally in Definition 7.8 of
\cite{schurmann00phd} using the termination ordering on
\LF terms that was discussed in Section~\ref{sec:terminate}.

\subsection{Case Analysis}\label{sec:m2case}

The case analysis rule considers all the possible top-level structures
for a ground term instantiating an eigenvariable $x$ of atomic type
$A_x$ in the context $\G$ in a judgment of the form $\G; \D \m2seq P \in
F$.
Since $A_x$ is atomic, the only cases to consider are those where the
head of the term is a constant from the \LF signature.
%
%

To state the rule, we need to extend the definition of proof terms
with a case construct:
\begin{align*}
  \text{Patterns:} \qquad &
  R ::= \mPattern {\G^\prime} {\G^{\prime\prime}} M\\
  \text{Cases:}   \qquad  &
  \O ::= \emptyCol \sep \O, \mItem R P\\
  \text{Proof Terms:} \qquad  &
  P ::= ... \sep \mCase x {\O}
\end{align*}
The case rule is
\begin{gather*}
  \infer[\kcase]{
    \mSeq {\G} {\D} {\mTm {\boxed{\mCase x {\O}}} F}
  }{
    \mcSeq {\G_1, x:A_x, \G_2} {\D} {\Si} {\mTm {\boxed{\O}} F}
  }
\end{gather*}
where $\G_1, x: A_x, \G_2$ is a valid permutation of $\G$ and $A_x$
depends on all the variables in $\G_1$.
The new judgment
$\mcSeq {\G_1, x:A_x, \G_2} {\D} {\Si} {\mTm {\O} F}$
is derived by considering all constants in $\Si$ whose target type
unifies with $A_x$.
The derivation rules for this judgment are shown in Fig. \ref{fig:case}.
The \ksiguni rule produces a new premise (and a new case in the proof
term) for every constant $c$ in $\Si$ that has a type $\Pi
\G_c. A_c$ such that $A_c$ unifies with $A_x$.
The situation where $A_x$ and $A_c$ do not unify is dealt with by the
\ksignonuni rule.

\caserules

\subsection{Interpretation of Proof Rules}
At an intuitive level, the proof rules of \M2 function as follows:
\begin{itemize}
\item \krecur introduces an inductive hypothesis as an assumption,
  usually as the first step, reading proofs upwards;
\item \kfalll followed by \kexistsl corresponds to an application of
  the inductive hypothesis;
\item \kfallr introduces hypotheses into the context;
\item \kcase corresponds to case analysis on a hypothesis;
\item \kexistsr finishes a proof branch by constructing witness from
  the context.
\end{itemize}

\newcommand{\exmTypeFamily}
           {a:\lfTAbsa {\G^I} {\lfTAbsa {\G^O} {\ktype}}}
\newcommand{\exmFormula}
           {\mFall {\G^I}
             {\mExst {\G^O}
               {\mExst {D: \lfAppTri a {\G^I} {\G^O}} {\top}}}}
\newcommand{\exmEndSequent}
           {\mSeq {\emptyCol} {\emptyCol} {\mTm {P} {\exmFormula}}}
\newcommand{\IH}
           {\mFall {\G^I}
              {\mExst {\G^O}
                {\mExst {D:\lfAppTri a {\G^I} {\G^O}} {\top}}}}
\newcommand{\siI}{\si^I_c}
\newcommand{\siO}{\si^O_c}
\newcommand{\fBranch}{\mExst {\G^O[\si^I_c]}
  {\mExst {D:\lfAppTri a {\si^I_c} {\G^O}} {\top}}}
\newcommand{\Gc}[1]{\G_{c_{#1}}}
\newcommand{\Sc}[1]{S_{c_{#1}}}

\section{Explicit Proofs from Totality Checking}\label{sec:genproofs}
Totality checking is a proof verification tool that asserts if the
user-defined \LF signatures represent proofs for meta-theorems.
This section describes an algorithm for generating \M2 proofs from
totality checking that does not use contexts and lemmas and the
proof for its correctness.
The algorithm consists of two steps.
Firstly, a well-moded type family, which can always be stated as
$\exmTypeFamily$ where $\G^I$ contains input parameters and $\G^O$
contains output parameters, is translated into an \M2 formula
$\exmFormula$.
Secondly, an \M2 proof tree for the sequent $\exmEndSequent$ is
generated by applying \M2 proof rules translated from totality
checking until a complete proof tree is built.
The validity of those applications is guaranteed by totality checking,
as we will show in the correctness proof for the algorithm.
In the following subsections, we first give an overview of the proof
generation process.
Then we discuss the individual steps for constructing the proof tree.
Finally, we prove the correctness of the proof generation algorithm.

\subsection{Overview of the Proof Generation}
Totality checking verifies that an \LF signature provided by user
represents a total function that computes ground outputs from ground
inputs.
This interpretation of proofs is translated into an \M2 proof tree as
shown in Fig. \ref{fig:overview}.
\begin{figure}[h!]
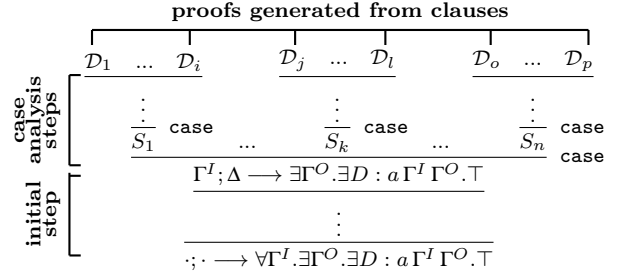

  \center
\begin{pgfpicture}{0cm}{0cm}{8.5cm}{3.6cm}
\pgfputat
    {\pgfxy(1.2,0.1)}
    {\pgfbox[left,bottom]{{\small
        $\infer[] {
          \mSeq {\emptyCol} {\emptyCol} {\exmFormula}
        }{
          \infer[]{\vdots}{
            \infer[\kcase]{
              \mSeq {\G^I}
                  {\D}
                  {\mExst {\G^O} {\mExst {D:\lfAppTri a {\G^I} {\G^O}} {\top}}}}{
              \infer[\kcase]{S_1}{
                \infer[]{\vdots}{
                  \drv_1 & ... & \drv_i
                }
              }
              &
              ...
              &
              \infer[\kcase]{S_k}{
                \infer[]{\vdots}{
                  \drv_j & ... & \drv_l
                }
              }
              &
              ...
              &
              \infer[\kcase]{S_n}{
                \infer[]{\vdots}{
                  \drv_o & ... & \drv_p
                }
              }
            }
          }
        }$
    }}}

    {\pgfsetlinewidth{0.8pt}}

  {\pgfline{\pgfxy(1.3,3.1)}{\pgfxy(1.3,3.3)}}
  {\pgfline{\pgfxy(2.6,3.1)}{\pgfxy(2.6,3.3)}}
  {\pgfline{\pgfxy(4,3.1)}{\pgfxy(4,3.3)}}
  {\pgfline{\pgfxy(5.2,3.1)}{\pgfxy(5.2,3.3)}}
  {\pgfline{\pgfxy(6.6,3.1)}{\pgfxy(6.6,3.3)}}
  {\pgfline{\pgfxy(1.3,3.3)}{\pgfxy(7.9,3.3)}}
  {\pgfline{\pgfxy(7.9,3.3)}{\pgfxy(7.9,3.1)}}
  {\pgfputat{\pgflabel{.5}{\pgfxy(1.3,3.3)}{\pgfxy(7.9,3.3)}{5pt}}{\pgfbox[center,base]{{\small \textbf{proofs generated from clauses}}}}}

  {\pgfline{\pgfxy(1,2.7)}{\pgfxy(1.15,2.7)}}
  {\pgfline{\pgfxy(1,2.7)}{\pgfxy(1,1.5)}}
  {\pgfline{\pgfxy(1,1.5)}{\pgfxy(1.15,1.5)}}
  {\pgfputlabelrotated{.5}{\pgfxy(1,1.5)}{\pgfxy(1,2.7)}{5pt}{\pgfbox[center,base]{{\small \textbf{steps}}}}}
  {\pgfputlabelrotated{.5}{\pgfxy(0.79,1.5)}{\pgfxy(0.79,2.7)}{5pt}{\pgfbox[center,base]{{\small \textbf{analysis}}}}}
  {\pgfputlabelrotated{.5}{\pgfxy(0.58,1.5)}{\pgfxy(0.58,2.7)}{5pt}{\pgfbox[center,base]{{\small \textbf{case}}}}}

  {\pgfline{\pgfxy(1,1.36)}{\pgfxy(1.15,1.36)}}
  {\pgfline{\pgfxy(1,1.4)}{\pgfxy(1,0.3)}}
  {\pgfline{\pgfxy(1,0.3)}{\pgfxy(1.15,0.3)}}
  {\pgfputlabelrotated{.5}{\pgfxy(1,0.3)}{\pgfxy(1,1.4)}{5pt}{\pgfbox[center,base]{{\small \textbf{step}}}}}
  {\pgfputlabelrotated{.5}{\pgfxy(0.79,0.3)}{\pgfxy(0.79,1.4)}{5pt}{\pgfbox[center,base]{{\small \textbf{initial}}}}}

\end{pgfpicture}
\nocaptionrule  \caption{Overview of the Proof Generation}
  \label{fig:overview}
\end{figure}
The proof tree consists of several segments, each consisting of
applications of proof rules translated from different components in
totality checking.
It is constructed as follows:
first there is an initial step that introduces an inductive hypothesis
and other hypotheses.
Then follows the case analysis steps that translate input coverage
checking into applications of \kcase.
Then every clause for $a$ in the \LF signature is translated into a proof
branch.
For each frontier sequent resulting from case analysis steps,
there exists a proof branch that, when instantiated, is the proof for
that sequent.
Plugging in the proof for those frontier sequents finishes
the proof.

For simplicity we omit proof terms in \M2 sequents in the following
discussions.
The translation can be extended to generate \M2 proofs with proof
terms in a straightforward manner.
The proof term for the end sequent in a translated proof represents a
total function, whose termination property is ensured by termination
checking.

\subsection{Initial Step}
Starting with the initial sequent $\mSeq {\emptyCol} {\emptyCol}
{\exmFormula}$, we apply the \krecur rule to introduce the goal
formula as an inductive hypothesis.
Then we apply the \kfallr rule to introduce hypotheses in $\G^I$ into
the \LF context.
\begin{figure}[h!]
\center
\begin{smallgather}
  \infer[\krecur]{
    \mSeq {\emptyCol} {\emptyCol}
          {\IH}
  }{
    \infer[\kfallr]{
      \mSeq {\emptyCol}
            {\IH}
            {\IH}
    }{
      \mSeq {\G^I}
            {\IH}
            {\mExst {\G^O} {\mExst {D:\lfAppTri a {\G^I} {\G^O}} {\top}}}
    }
  }
\end{smallgather}
\nocaptionrule \caption{Introducing Hypotheses}
\label{fig:init}
\end{figure}

\subsection{Case Analysis Steps}
In this phase, we translate input coverage checking into applications
of the \kcase rule.
We assume that input coverage checking succeeds and returns a
sequence of splitting operations that leads to subgoals covered by
input patterns.
We show that every splitting operation can be translated into an
application of \kcase.
In Section \ref{sec:gencover}, we described
that splitting on the variable $x$ in an input coverage goal
$\lfSeq {\G^I_1, x:A_x, \G^I_2} {\Si} A {\lfTAbsa {\G^O} {\ktype}}$
produces a subgoal
$\lfSeq {\G^\prime, \G^I_2[\sigma]} {\Si}
  {A[\sigma]} {\lfTAbsa {\G^O[\sigma]} {\ktype}}$
for every constant $c:\lfTAbsa {\G_c} {A_c}$ such that $A_c$ and $A_x$ unifies
and
$\lfSeq {\G^\prime} {\Si} {\sigma = mgu(A_x = A_c, x= \lfApp c {\G_c})}
{(\G^I_1,x:A_x,\G_c)}$.
From the definitions of \kcase in Section \ref{sec:m2case}, we
can see that splitting on a variable $x:A$ corresponds exactly to case
analysis of $x$.
To formally prove the equivalence, we define a relation between input
coverage goals and \M2 sequents:
\begin{definition}
Let $\rsc$ be a binary relation between input coverage goals and \M2
sequents.
$G \rsc S$ holds if and only if $G$ is an input coverage goal
$\lfSeq {\G^I} {\Si} A {\lfTAbsa {\G^O} {\ktype}}$
and $S$ is an \M2 sequent
$\mSeq {\G^I} {\D} {\mExst {\G^O} {\mExst {D : \lfApp A {\G^O}} {\top}}}$
where $\D = \IH$.
\end{definition}
We overload the $\rsc$ relation with finite sets of coverage goals and
\M2 sequents:
\begin{definition}
Given a finite set of input coverage goals
$\GSet = \mathset{G_i \sep 1 \leq i \leq n}$
and \M2 sequents
$\SSet = \mathset{S_i \sep 1 \leq i \leq n}$,
$\GSet \rsc \SSet$
if and only if $G_i \rsc S_i$ for $1\leq i \leq n$.
\end{definition}
Then we have the following lemma:
\begin{lemma}\label{lem:rsc}
Given $G \rsc S$ where $G$ is
$\lfSeq {\G^I_1, x:A_x, \G^I_2} {\Si} A {\lfTAbsa {\G^O} {\ktype}}$
and $S$ is
$\mSeq {\G^I_1, x:A_x, \G^I_2} {\D}
  {\mExst {\G^O}
    {\mExst {D: \lfApp A {\G^O}} {\top}}}$,
if $\GSet$ is the set of subgoals resulting from splitting on $x$ in $G$ and
$\SSet$ is the set of frontier sequents resulting from applying \kcase to $x$
in $S$,
then $\GSet \rsc \SSet$.
\end{lemma}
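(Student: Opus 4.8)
The plan is to unfold both the splitting operation and the \kcase rule, pair up their outputs constant by constant, and verify that each pair lies in $\rsc$. First I would note that a single application of \kcase to $x$ reduces, through the rules of Fig.~\ref{fig:case}, to a derivation of the case-analysis judgment $\mcSeq {\G^I_1, x:A_x, \G^I_2} {\D} {\Si} {F}$ that traverses the entire signature $\Si$, where $F$ abbreviates $\mExst {\G^O} {\mExst {D: \lfApp A {\G^O}} {\top}}$. Each constant $c:\lfTAbsa {\G_c} {A_c}$ is dispatched by \ksignonuni when $A_x$ and $A_c$ do not unify and by \ksiguni otherwise, the latter contributing exactly one frontier sequent. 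Hence $\SSet$ collects precisely the \ksiguni premises, one per constant $c$ for which $A_x = A_c, x = \lfApp c {\G_c}$ is unifiable. Since input splitting produces a subgoal for exactly these same constants and employs the identical unification problem with mgu $\sigma = mgu(A_x = A_c, x = \lfApp c {\G_c})$, the families $\GSet$ and $\SSet$ share a common index set, yielding a canonical pairing $(G_c, S_c)$.

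Second, for each such $c$ I would verify $G_c \rsc S_c$ against the definition of $\rsc$. Here $G_c$ is $\lfSeq {\G^\prime, \G^I_2[\sigma]} {\Si} {A[\sigma]} {\lfTAbsa {\G^O[\sigma]} {\ktype}}$ and $S_c$ is $\mSeq {\G^\prime, \G^I_2[\sigma]} {\D[\sigma]} {F[\sigma]}$. The \LF contexts coincide outright, so the remaining obligations are to show that $F[\sigma]$ equals $\mExst {\G^O[\sigma]} {\mExst {D: \lfApp {A[\sigma]} {\G^O[\sigma]}} {\top}}$ and that $\D[\sigma] = \IH$; together these place $S_c$ in exactly the shape the relation $\rsc$ demands of the goal $G_c$, completing that pair.

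The main obstacle is precisely this substitution bookkeeping. For the formula I would use that $\sigma$, being an mgu over $\G^I_1, x:A_x, \G_c$, has a domain disjoint from the eigenvariables bound by the existential prefix $\G^O$, which are chosen fresh; capture-avoidance then lets $\sigma$ commute past the binders, rewriting only the types in $\G^O$ and the body $\lfApp A {\G^O}$ and producing the claimed form. For the assumption set, the key observation is that the inductive hypothesis $\D = \IH$ is \emph{closed} --- it binds all of $\G^I$ universally and all of $\G^O$ existentially --- so $\sigma$ finds no free variable to act on and $\D[\sigma] = \IH$. With both equalities established, $G_c \rsc S_c$ holds for every $c$, and therefore $\GSet \rsc \SSet$, as required.
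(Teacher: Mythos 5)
Your proposal is correct and follows essentially the same route as the paper's own proof: both pair the splitting subgoals with the \ksiguni premises constant by constant via the shared unification problem $(A_x = A_c, x = \lfApp c {\G_c})$ and then check the definition of $\rsc$ on each pair. The only difference is that you make explicit two points the paper leaves implicit --- that $\sigma$ commutes past the existential binders over $\G^O$, and that $\D[\sigma] = \D$ because the inductive hypothesis is closed --- which is a welcome bit of added rigor rather than a change of strategy.
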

\begin{proof}
Let $G_i$ be a subgoal
$\lfSeq {\G^\prime, \G^I_2[\si]} {\Si} {A[\si]}
  {\lfTAbsa {\G^O[\sigma]} {\ktype}}$
resulting from splitting for some constant $c:\lfTAbsa {\G_c} {A_c}$
such that $A_c$ and $A_x$ unifies and
$\lfSeq {\G^\prime} {\Si} {\sigma = mgu(A_x = A_c, x= \lfApp c {\G_c})}
  {(\G^I_1,x:A_x,\G_c)}$.
By the definition of case analysis, there must be a sequent
$S_i = \mSeq {\G^\prime, \G^I_2[\sigma]} {\D}
  {\mExst {\G^O[\sigma]}
    {\mExst {D: \lfApp {A[\si]} {\G^O[\si]}} {\top}}}$
resulting from applying \ksiguni.
Thus we have $G_i \rsc S_i$ by the definition of $\rsc$.
Since goals in $\GSet$ one-to-one correspond to sequents in $\SSet$,
we have $\GSet \rsc \SSet$.
\end{proof}
The following lemma leads to an algorithm for translating splittings
to applications of \kcase:
\begin{lemma}\label{lem:rscproof}
Given a set of input coverage goals $\GSet$ and a partial \M2 proof
with frontier sequents $\SSet$ such that $\GSet \rsc \SSet$, if
for some $G_k \in \GSet$ such that
$G_k = \lfSeq {\G^I_1, x:A_x, \G^I_2} {\Si} A {\lfTAbsa {\G^O} {\ktype}}$,
splitting on $x$ results in a new set of coverage goals $\GSet'$, then
applying the \kcase rule on $x$ in $S_k$ where $G_k \rsc S_k$ results
in a partial \M2 proof with frontier sequents $\SSet'$ such that
$\GSet' \rsc \SSet'$.
\end{lemma}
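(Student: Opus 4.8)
The plan is to reduce this statement to Lemma~\ref{lem:rsc}, which already establishes the desired correspondence for a single goal/sequent pair under a single split. The essential observation is that both operations appearing here --- splitting on $x$ in the coverage goal $G_k$, and applying \kcase to $x$ in the matched sequent $S_k$ --- act \emph{only} on the distinguished pair $(G_k, S_k)$, leaving every other member of $\GSet$ and every other frontier sequent of the partial proof untouched. The lemma is therefore a lifting of Lemma~\ref{lem:rsc} from one pair to a whole set, and most of the content is bookkeeping about how $\GSet'$ and $\SSet'$ are assembled.

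First I would make precise how the two sets are formed. Splitting on $x$ replaces $G_k$ by the finite set of subgoals it produces (finiteness is guaranteed since splitting is finitary), so $\GSet' = (\GSet \setminus \{G_k\}) \cup \mathcal{H}$, where $\mathcal{H}$ is the set of subgoals obtained by splitting $G_k$. Correspondingly, applying \kcase to $x$ in $S_k$ replaces the single frontier sequent $S_k$ by the premises generated in that derivation --- the new frontier sequents produced by the \ksiguni steps --- while all remaining frontier sequents of the partial proof are unaffected; thus $\SSet' = (\SSet \setminus \{S_k\}) \cup \mathcal{T}$, where $\mathcal{T}$ is this set of new frontier sequents.

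Next I would invoke Lemma~\ref{lem:rsc} on the pair $G_k \rsc S_k$: since $\mathcal{H}$ results from splitting on $x$ in $G_k$ and $\mathcal{T}$ results from applying \kcase to $x$ in $S_k$, the lemma gives $\mathcal{H} \rsc \mathcal{T}$. The unchanged portions carry over directly from the hypothesis $\GSet \rsc \SSet$: restricting the underlying one-to-one correspondence to $\GSet \setminus \{G_k\}$ yields $(\GSet \setminus \{G_k\}) \rsc (\SSet \setminus \{S_k\})$. Combining the two correspondences, which act on disjoint pieces, establishes $\GSet' \rsc \SSet'$, as required.

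The step I expect to require the most care is the second one --- verifying that applying \kcase locally to the frontier sequent $S_k$ within the partial proof genuinely replaces only $S_k$ among the frontier sequents, and that the resulting frontier sequents are \emph{exactly} the premises appearing in the \kcase/\ksiguni derivation, so that the bijection underlying the overloaded relation $\rsc$ is preserved and no spurious or missing sequents arise. Once this correspondence of frontiers is pinned down, the conclusion is an immediate appeal to Lemma~\ref{lem:rsc}.
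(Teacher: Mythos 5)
Your proposal is correct and follows essentially the same route as the paper's own proof: the unchanged goals and frontier sequents (those other than $G_k$ and $S_k$) remain related by $\rsc$, and the newly generated subgoals and frontier sequents are related by an appeal to Lemma~\ref{lem:rsc}. The paper states this in two sentences; you merely spell out the set-theoretic bookkeeping more explicitly.
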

\begin{proof}
The coverage goals $G_i \in \GSet$ for $i \neq k$ and frontier
sequents $S_i \in \SSet$ for $i \neq k$ are still present and related
by $\rsc$ after splitting and applying \kcase.
The new coverage goals generated by splitting are related to the new
frontier sequents generated by applying \kcase by Lemma \ref{lem:rsc}.
\end{proof}
Starting with the partial \M2 proof resulting from the initial step,
we translate input coverage checking into applications of \kcase.
We maintain a set of input coverage goals $\GSet$ and a partial proof
with frontier sequents $\SSet$ such that $\GSet \rsc \SSet$ for the
translation.
Initially, $\GSet$ is a singleton containing
$\lfSeq {\G^I} {\Si} {\lfApp a {\G^I}} {\lfTAbsa {\G^O} {\top}}$
and $\SSet$ is a singleton containing
$\mSeq {\G^I} {\IH} {\mExst {\G^O} {\mExst {D: \lfAppTri a {\G^I}
      {\G^O}} {\top}}}$
that is the frontier sequent of the proof in Figure
\ref{fig:init}.
By Lemma \ref{lem:rscproof}, for every splitting operation on some
$G_i \in \GSet$, we are able to apply the \kcase rule to the
corresponding frontier sequent $S_i \in \SSet$ where $G_i \rsc S_i$ to get
$\GSet'$ and $\SSet'$ such that $\GSet' \rsc \SSet'$.
In the end, we get a set of input coverage goals $\GSet'$ that are
immediately covered and a partial proof tree with frontier sequents
$\SSet'$ such that $\GSet' \rsc \SSet'$.
%


\subsection{Generating Proof Trees from Clauses}\label{sec:gen_proof_from_clauses}
Let
$\GSet' = \mathset {G_i \sep 1\leq i \leq n^\prime}$
and
$\SSet' = \mathset{S_i \sep 1\leq i \leq n^\prime}$, where $\GSet' \rsc
\SSet'$, be the input coverage goals and sequents resulting from case
analysis steps.
To finish the \M2 proof, we need to derive sequents in $\SSet'$.
For this we define the instantiation of an \M2 sequent:
\begin{definition}\label{def:seq_inst}
Let $S$ be an \M2 sequent $\mSeq {\G} {\D} F$ and $\si$ be a
substitution such that $\lfSeq {\G^\prime} {\Si} {\si} {\G}$.
The instantiation of $S$ under $\si$ is
$\mSeq {\G^\prime} {\D[\si]} {F[\si]}$.
\end{definition}
The following lemma shows that immediate coverage in input coverage
checking can be reflected into instantiation of \M2 sequents.
\begin{lemma}\label{lem:in_immed_cover_inst}
Given an input coverage goal $G^\prime$ and an input coverage pattern
$G$ such that $G^\prime$ is immediately covered by $G$ under a
substitution $\si$, if $S$ and $S'$ are \M2 sequents for which $G \rsc
S$ and $G^\prime \rsc S^\prime$ hold, then $S^\prime$ is the instantiation
of $S$ under $\si$.
\end{lemma}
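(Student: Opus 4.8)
The plan is to unpack the two defining relations ($\rsc$ and immediate coverage) together with the instantiation definition and check that the claimed identity of sequents falls out by direct computation.

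First I would write down exactly what each hypothesis gives us. Since $G \rsc S$ holds, by the definition of $\rsc$ the pattern $G$ has the form $\lfSeq {\G^I} {\Si} A {\lfTAbsa {\G^O} {\ktype}}$ and the sequent $S$ is $\mSeq {\G^I} {\D} {\mExst {\G^O} {\mExst {D : \lfApp A {\G^O}} {\top}}}$ with $\D = \IH$. Likewise $G' \rsc S'$ means $G'$ is some $\lfSeq {\G^{I\prime}} {\Si} {A'} {\lfTAbsa {\G^{O\prime}} {\ktype}}$ and $S'$ is $\mSeq {\G^{I\prime}} {\D} {\mExst {\G^{O\prime}} {\mExst {D : \lfApp {A'} {\G^{O\prime}}} {\top}}}$. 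Second, I would invoke the definition of immediate coverage: $G'$ being immediately covered by $G$ under $\si$ supplies a substitution $\lfSeq {\G^{I\prime}} {\Si} {\si} {\G^I}$ with $A' \equiv A[\si]$ (and correspondingly for the kind part, so that $\G^{O\prime} = \G^O[\si]$). The goal is then simply to verify that instantiating $S$ by $\si$, as prescribed in Definition~\ref{def:seq_inst}, produces exactly $S'$.

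The core of the argument is a term-by-term comparison. Applying $\si$ to $S$ yields the context $\G^{I\prime}$ (the source context of $\si$), the assumption set $\D[\si]$, and the formula $(\mExst {\G^O} {\mExst {D : \lfApp A {\G^O}} {\top}})[\si]$. Pushing the substitution through the existential quantifiers gives $\mExst {\G^O[\si]} {\mExst {D : \lfApp {A[\si]} {\G^O[\si]}} {\top}}$, which matches the formula of $S'$ once we use $A' \equiv A[\si]$ and $\G^{O\prime} = \G^O[\si]$. The one point needing care is the assumption set: I must check that $\D[\si] = \D$, i.e.\ that the inductive hypothesis $\IH$ is invariant under $\si$. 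This holds because $\IH$ is the closed formula $\mFall {\G^I} {\ldots}$ quantifying over all of $\G^I$ internally, so it contains no free variables for $\si$ to act on; hence $\D[\si] = \D$ and both sequents carry the same assumptions.

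The main obstacle I anticipate is bookkeeping around the quantified variables rather than any deep difficulty: I need the renaming inherent in immediate coverage (Definition of immediate cover only allows $\si$ to rename) to line up the bound output variables $\G^O$ under substitution with the output context $\G^{O\prime}$ of $G'$, and I must confirm that commuting $\si$ past the binder $\mExst {\G^O}{\cdots}$ is legitimate — that is, that no variable capture occurs and the substitution judgment $\lfSeq {\G^{I\prime}} {\Si} {\si} {\G^I}$ correctly extends to the body as in the standard substitution lemma. Granting that, the lemma is essentially a matter of unfolding definitions and observing that instantiation of the \M2 sequent is defined precisely so as to mirror immediate coverage of the corresponding \LF coverage goal.
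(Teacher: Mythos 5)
Your proposal is correct and follows exactly the route the paper intends: the paper's own proof is the one-liner ``straightforward from definitions of immediate coverage, instantiation and $\rsc$,'' and your unfolding (matching the formula via $A' \equiv A[\si]$ and $\G^{O\prime} = \G^O[\si]$, and observing that $\D[\si] = \D$ because the inductive hypothesis is closed) supplies precisely the missing details. One small correction: the restriction of the covering substitution to a renaming applies only to \emph{output} immediate coverage (Definition~\ref{def:out_immed_cover}); input immediate coverage permits an arbitrary substitution, but your argument does not actually need the renaming assumption, so nothing breaks.
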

\begin{proof}
Straightforward from definitions of immediate coverage, instantiation and
$\rsc$ relation.
\end{proof}
Let $G_i \in \GSet$ and $S_i \in \SSet$ where $G_i \rsc S_i$ and $G
\rsc S$ where $G$ is a pattern derived from a clause $c$ that
immediately covers $G_i$.
We are going to translate $c$ into an \M2 proof with the end sequent $S$.
By Lemma \ref{lem:in_immed_cover_inst}, the instantiation of the proof
for $S$ under the substitution for the immediate coverage of $G_i$ by
$G$ is a proof for $S_i$.
The instantiation of a proof generated from a clause is also a valid
proof, which we will prove after describing the translation of
clauses.

\subsubsection{Translation of Clauses into Proofs}\label{sec:trans_proof_from_clauses}
We first describe the translation of clauses into \M2 proofs.
Let $a : \lfTAbsa {\G^I} {\lfTAbsa {\G^O} {\ktype}}$
be a type family where $\G^I = x_1:A_1, ..., x_k:A_k$ and $\G^O =
x_{k+1}:A_{k+1},...,x_n:A_n$,
$c: a\, M_1\, ...\, M_n \from a\, M_{1_1}\, ...\, M_{1_n} \from
... \from a\, M_{m_1}\, ...\, M_{m_n}$
be a clause for $a$,
$\siI = (M_1/x_1,...,M_k/x_k)$
be the substitution containing input arguments to the head and
$\G^I_c$ be the context containing variables in $\siI$.
The input coverage pattern $G$ derived from $c$ is
$\lfSeq {\G^I_c} {\Si} {\lfApp a {\siI}} {\lfTAbsa {\G^O[\siI]} {\ktype}}$.
Let $S$ be the sequent
\begin{gather*}
\mSeq {\G^I_c} {\D} {\fBranch}
\end{gather*}
such that $G \rsc S$, where $\D$ contains the inductive hypothesis $\IH$.
%
%
The \M2 proof for $S$ is translated from the operational
interpretation of $c$ as a function that computes ground outputs from
ground inputs.
Starting with the input arguments to the head, this function
recursively calls $a$.
Each recursive call to $a$ is represented by a premise
$\lfAppFour a {M_{i_1}} {...} {M_{i_n}}$
for some $i$ such that $1 \leq i \leq m$.
After the recursive calls, it constructs outputs from initial inputs
and the results of recursive calls.
We describe the translation from $c$ to a proof for $S$ as a recursive
procedure.
In the base step, we translate the construction of outputs into
an application of the \kexistsr rule.
In each recursive step, we translate a recursive call to $a$ into
applications of \kfalll, \kexistsl and \kcase rules and recursively
generate the sub-proof.

Before describing the translation steps, we present some notations
that will be used in our discussion and an alternative definition of
output coverage checking in order to simplify the discussion.
\begin{definition}\label{def:const_trans}
Let
$a : \lfTAbsa {\G^I} {\lfTAbsa {\G^O} {\ktype}}$
be a type family with well-defined mode, where
$\G^I = x_1:A_1^\prime,...,x_k:A_k^\prime$
and
$\G^O = x_{k+1}:A_{k+1}^\prime,..,x_n:A_n^\prime$.
Let
$c : A \from A_1 \from ... \from A_m$
be a clause for $a$ where
$A = \lfAppFour a {M_1} {...} {M_n}$
and
$A_i = \lfAppFour a {M_{i_1}} {...} {M_{i_n}}$
for $1 \leq i \leq m$.
Let $\siI = (M_1/x_1,...,M_k/x_k)$ be the substitution containing input
arguments to the head of $c$.
A set of contexts $\mathset{\Gc{i} \sep 0 \leq i \leq m}$ is
recursively defined as follows: $\Gc{0}$ contains input variables in $A$
and $\Gc{i} = \Gc{i-1}, \G_i^\prime$ for $1\leq i\leq m$ where
$\G_i^\prime$ contains output variables in $A_i$ and the variable
$D_i:A_i$.
A set of sequents $\mathset{\Sc{i} \sep 0 \leq i \leq m}$ is
defined where
$\Sc{i} = \mSeq {\Gc{i}} {\D_i} {\fBranch}$
\end{definition}
The following lemma relates this definition to mode checking:
\begin{lemma}\label{lem:const_ctx_relate_mode}
Given a clause $c$, the context $\Gc{i}$ as defined in Definition
\ref{def:const_trans} is a superset of the context $\G_i$ as defined
in Definition \ref{def:mode_csst} for $i$ such that $0 \leq i \leq m$.
\end{lemma}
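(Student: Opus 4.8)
The plan is to prove this by induction on $i$, comparing the two contexts one increment at a time, with the single recurring observation that a \emph{strict} occurrence of a variable is in particular an occurrence of that variable. I read ``superset'' at the level of variable--type bindings: I must show that every binding in $\G_i$ (Definition \ref{def:mode_csst}) also appears in $\Gc{i}$ (Definition \ref{def:const_trans}). Type agreement on shared variables will come for free, since in both definitions all the variables in sight originate from the one fixed clause $c$, so a variable common to the two contexts necessarily carries the same type in each.

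For the base case $i = 0$, I would note that $\G_0$ collects exactly the input variables of $A$ that occur \emph{strictly}, whereas $\Gc{0}$ collects \emph{all} the input variables of $A$; since every strict occurrence is an occurrence, the bindings of $\G_0$ are contained in those of $\Gc{0}$, giving $\G_0 \subseteq \Gc{0}$. For the inductive step I would assume $\G_{i-1} \subseteq \Gc{i-1}$ and compare the two increments. In Definition \ref{def:mode_csst} the step adds only the output variables occurring strictly in $A_i$, while in Definition \ref{def:const_trans} it adds all the output variables of $A_i$ together with the extra binding $D_i : A_i$. The same strict-implies-occurs remark shows the mode-increment is contained in the translation-increment, and combining this with the induction hypothesis yields $\G_i \subseteq \Gc{i}$, closing the induction.

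The argument is essentially bookkeeping, so I do not anticipate a genuine obstacle; the one point demanding care is the notational clash, since both definitions name their increment $\G_i^\prime$ while meaning different things (the strictly occurring output variables in one case, all output variables plus $D_i$ in the other). I would keep these two readings explicitly apart in the write-up. I would also remark that the $D_i : A_i$ bindings, which are present in $\Gc{i}$ but absent from $\G_i$, can only enlarge the translation context and are therefore harmless for a superset claim; they are in fact the reason the inclusion is proper rather than an equality.
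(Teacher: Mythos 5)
Your proposal is correct and matches the paper's own (one-line) argument: the paper likewise observes that $\G_i$ is obtained from $\Gc{i}$ by deleting the $D_j$ bindings and the variables lacking a strict occurrence, which is exactly your strict-implies-occurs bookkeeping, merely stated without the explicit induction.
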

\begin{proof}
It is easy to see that $\G_i$ can be obtained by removing variables
$D_j$ for $1 \leq j \leq i$ and variables that do not have a strict
occurrence in the type of $c$ from $\Gc{i}$.
\end{proof}
By this lemma and the definition of mode consistency, we know that for
a constant $c: A \from A_1 \from ... \from A_m$ the input variables to
$A_i$ are contained in $\Gc{i-1}$ for $1 \leq i \leq m$.

The output coverage goals and patterns in Definition \ref{def:out_cover} are
redefined as follows: the output coverage pattern for $A_i$ is now
$\lfSeq {\Gc{i-1}, \G^O_i} {\Si} {A_i} {\ktype}$
and the output coverage goal for
$A_i$ is now $\lfSeq {\Gc{i-1},\G^O[\si^I_i]} {\Si} {\lfAppTri a {\si^I_i}
  {\G^O}} {\ktype}$.
This is a superfluous change from Definition \ref{def:out_cover}: since the
context $\G^I_i$ containing input variables to $A_i$ is closed and
a subset of $\Gc{i-1}$, splitting and immediate coverage will never
affect variables that are not in $\G^I_i$.
Thus splitting and immediate coverage in Definition \ref{def:out_splitting} and
\ref{def:out_immed_cover} are adopted to this new definition
in a straightforward manner.

By the definition, $\Sc{0}$ is the sequent $S$.
We construct a proof for $S$ by recursively constructing proofs for
$\mathset{\Sc{i} \sep 0 \leq i \leq m}$.
In the base step we construct a proof for $\Sc{m}$.
In the recursive steps we construct proof for $\Sc{i-1}$ for $1 \leq i
\leq m$.

\textbf {Base Step.}
In the base step we translate the construction of outputs into an
application of \kexistsl to $\Sc{m}$ as shown in Figure
\ref{fig:proof_branch_base}.
\begin{figure}[h!]
\center
\begin{gather*}
  \infer[\kexistsl]{
    \mSeq {\Gc{m}} {\D_{m}} {\fBranch}
  }{
    \lfSeq {\Gc{m}} {\Si} {\si_{m}}
           {(\G^O[\siI], D: \lfAppTri a {\siI} {\G^O})}
  }
\end{gather*}
\nocaptionrule \caption{Finishing the Proof by Applying \kexistsr Rule}
\label{fig:proof_branch_base}
\end{figure}
By Definition \ref{def:const_trans}, $\Gc{m}$ contains $D_{1}:A_{1}, ...,
D_{m}:A_{m}$.
Letting $M_{k+1}, ..., M_n$ be the output arguments to the head of $c$
and $\siO = (M_{k+1}/x_{k+1}, ..., M_n/x_n)$, $\si_{m} = (\siO, (c\,
D_m\, ..., D_1)/D)$ in Figure \ref{fig:proof_branch_base}.
\begin{lemma}\label{lem:const_output}
$\lfSeq {\Gc{m}} {\Si} {\si_{m}}
  (\G^O[\siI], D: \lfAppTri a {\siI} {\G^O})$ holds.
\end{lemma}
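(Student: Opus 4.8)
The plan is to establish the substitution typing judgment directly from its defining rules \kwd{subst-ept} and \kwd{subst-typ}, peeling off the binding for $D$ first since it is the rightmost entry of the target context $(\G^O[\siI], D: \lfAppTri a {\siI}{\G^O})$. Writing $\si_m = (\siO, (c\, D_m\, \ldots\, D_1)/D)$, a single application of \kwd{subst-typ} reduces the goal to two subgoals: first, $\lfSeq {\Gc{m}}{\Si}{c\, D_m\, \ldots\, D_1}{(\lfAppTri a {\siI}{\G^O})[\siO]}$, which types the witness supplied for $D$; and second, $\lfSeq {\Gc{m}}{\Si}{\siO}{\G^O[\siI]}$, which types the remaining substitution.

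For the first subgoal I would begin by simplifying the target type. Since $\lfAppTri a {\siI}{\G^O}$ is $a\, M_1\, \ldots\, M_k\, x_{k+1}\, \ldots\, x_n$ and the head's input arguments $M_1,\ldots,M_k$ do not mention the output parameters $x_{k+1},\ldots,x_n$, applying $\siO = (M_{k+1}/x_{k+1},\ldots,M_n/x_n)$ yields exactly the head type $A = a\, M_1\, \ldots\, M_n$. It then remains to show $\lfSeq {\Gc{m}}{\Si}{c\, D_m\, \ldots\, D_1}{A}$. Recalling that, once its implicitly quantified variables are instantiated and the $\from$ connectives unfolded, $c$ has type $A_m \to \cdots \to A_1 \to A$, and that by Definition \ref{def:const_trans} the context $\Gc{m}$ contains $D_i : A_i$ for each $1 \le i \le m$, repeated use of the application rule for objects delivers the desired typing, provided every variable occurring in the head and premises of $c$ is available in $\Gc{m}$.

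Establishing this last proviso is where the argument does its real work, and it is the step I expect to be the main obstacle. Here I would appeal to Lemma \ref{lem:const_ctx_relate_mode} together with mode consistency (Definition \ref{def:mode_csst}): the input variables of each premise $A_i$ lie in $\Gc{i-1}$, the output variables of $A_i$ are placed in $\Gc{m}$ by construction, the output variables of the head lie in $\G_m \subseteq \Gc{m}$ by output consistency, and its input variables lie in $\Gc{0} \subseteq \Gc{m}$. Hence every free variable of the clause is bound in $\Gc{m}$, so the instantiations of the implicitly quantified arguments of $c$ as well as the arguments $D_1,\ldots,D_m$ are all well-typed there.

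For the second subgoal I would peel $\siO$ apart one mapping at a time, again via \kwd{subst-typ}, reducing it to showing, for each output position $j$ with $k+1 \le j \le n$, that $\lfSeq {\Gc{m}}{\Si}{M_j}{A_j[\siI][(M_{k+1}/x_{k+1},\ldots,M_{j-1}/x_{j-1})]}$. These facts follow from the well-typedness of the head $A = a\, M_1\, \ldots\, M_n$: since $a : \lfTAbsa {x_1:A_1,\ldots,x_n:A_n}{\ktype}$, inverting the type-family application rule gives $M_j : A_j[(M_1/x_1,\ldots,M_{j-1}/x_{j-1})]$, and this type is precisely $A_j[\siI][(M_{k+1}/x_{k+1},\ldots,M_{j-1}/x_{j-1})]$ because $\siI$ accounts for the first $k$ of these substitutions. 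As each $M_j$ is again well-typed in $\Gc{m}$ by the coverage of clause variables argued above, the \kwd{subst-typ} derivation goes through, completing the proof.
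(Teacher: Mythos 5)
Your proposal is correct and follows essentially the same route as the paper's proof: a single application of \kwd{subst-typ}, with the typing of $\siO$ discharged via output consistency of $A$ relative to $\Gc{m}$ (obtained from mode consistency and Lemma \ref{lem:const_ctx_relate_mode}) together with the well-typedness of $A$, and the typing of $c\, D_m\, \ldots\, D_1$ at $A$ discharged from the $D_i : A_i$ recorded in $\Gc{m}$. You simply spell out in more detail the steps the paper leaves implicit, such as the simplification $(\lfAppTri a {\siI}{\G^O})[\siO] = A$ and the inversion argument for typing each $M_j$.
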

\begin{proof}
By the definition of mode consistency and Lemma
\ref{lem:const_ctx_relate_mode},
$A$ is output consistent relative to $\Gc{m}$.
Thus $\Gc{m}$ contains all free variables in $\siO$. By the
well-typedness of $A$,
$\lfSeq {\Gc{m}} {\Si} {\siO} {\G^O[\siI]}$ holds.
Thus we have
\begin{gather*}
  \infer[\kwd{subst-typ}]{
    \lfSeq {\Gc{m}} {\Si}
           {\si_m}
           {(\G^O[\siI], D: \lfAppTri a {\siI} {\G^O})}
  }{
    \lfSeq {\Gc{m}} {\Si} {\siO} {\G^O[\siI]}
    &
    \lfSeq {\Gc{m}} {\Si} {\lfAppFour c {D_m} {...} {D_1}} {A}
  }
\end{gather*}
\end{proof}

\textbf {Recursive Steps. }
For $i$ such that $1 \leq i \leq m$, we create a proof tree for
$\Sc{i-1}$ by first translating the recursive call represented by
$A_i = \lfAppFour a {M_{i_1}} {...} {M_{i_n}}$ and then recursively
generating the proof for $\Sc{i}$.
A recursive call can be translated to an application of the inductive
hypothesis in the \M2 proof as shown in Figure
\ref{fig:proof_branch_app_ih}, in which
$\si^I_i = (M_{i_1}/x_1, ..., M_{i_k}/x_k)$
is a substitution containing input arguments to $A_i$.
\begin{figure}[h!]
\center
\begin{smallgather}
  \infer[\kfalll]{
    \mSeq {\Gc{i-1}} {\D_{i-1}} F
  }{
    \lfSeq {\Gc{i-1}} {\Si} {\si^I_i} {\G^I}
    &
    \infer[\kexistsl]{
      \mSeq {\Gc{i-1}}
            {\D_{i-1}^\prime}
            F
    }{
      \mSeq {\Gc{i-1}, \G^O[\si^I_i], D_i: \lfAppTri a {\si^I_i} {\G^O}}
            {\D_{i-1}^\prime}
            F
    }
  }
\\
\text{where
$\D_{i-1}^\prime = {\D_{i-1}, \mExst {\G^O[\si^I_i]} {\mExst {D_i:
      \lfAppTri a {\si^I_i} {\G^O}} {\mTrue}}}$}
\\
\text{and $F = \fBranch$}
\end{smallgather}
\nocaptionrule \caption{Applying the Inductive Hypothesis}
\label{fig:proof_branch_app_ih}
\end{figure}

In the simple case when the output arguments $M_{i_{k+1}} ... M_{i_n}$
to $A_i$ are variables, the frontier sequent resulting from applying
\kfalll and \kexistsl is exactly $\Sc{i}$.
By recursively generating the proof for $\Sc{i}$ we finish the translation.
However, in many cases the output arguments in $A_i$ are not
variables.
For example, in the \ksubred example, the first premise of \kevapp is
\begin{align*}
  & \ecdSubred
      {(Dev_1 : \ecdEval M {(\ecdAbs {M^\prime})})}
      {(Dty_1 : \ecdOf M {(\ecdArr {T_1} T)})}\\
  & \qquad
      (\ecdOfAbs {(Dty_3 :
        \lfTAbs {x} {\ktm} {\ecdOf x {T_1} \to \ecdOf {(\lfApp {M^\prime} x)} T})})
\end{align*}
which has an output $\kofabs~Dty_3$.
By applying the inductive hypothesis on $Dev_1$ and $Dty_1$, we get an
new variable
$D_3:\ecdOf {(\ecdAbs {M^\prime})} {(\ecdArr {T_1} T)}$ in the context.
We need to perform an inversion (case analysis) on $D_3$ to introduce
the output variable $Dty_3$ into the context.

The inversion consists of \kcase rules translated from the output
coverage checking.
The translation is similar to translating input coverage checking.
First we define a relation between output coverage goals and \M2 sequents.
\begin{definition}\label{def:roc}
Let $\roc$ be a binary relation between output coverage goals and \M2
sequents.
$G \roc S$ iff $G$ is an output coverage goal
$\lfSeq {\G^I, \G^O} {\Si} A {\ktype}$
where $\G^I$ contains input parameters and $\G^O$ contains output
parameters and $S$ is an \M2 sequent
$\mSeq {\G^I, \G^O, D:A} {\D} F$
where $\D$ contains the inductive hypothesis and
$F = \fBranch$.
\end{definition}
\begin{definition}
Given a finite set of output coverage goals
$\GSet = \mathset{G_i \sep 1 \leq i \leq n}$
and \M2 sequents
$\SSet = \mathset{S_i \sep 1 \leq i \leq n}$,
$\GSet \roc \SSet$
if and only if $G_i \roc S_i$ for $1\leq i \leq n$.
\end{definition}
Then the following lemma holds:
\begin{lemma}\label{lem:roc}
Given $G \roc S$ where $G$ is
$\lfSeq {\G^I, \G^O_1, x:A_x, \G^O_2} {\Si} A {\ktype}$
and $S$ is
$\mSeq {\G^I, \G^O_1, x:A_x, \G^O_2, D:A} {\D} F$,
if $\GSet$ is the set of subgoals resulting from splitting on $x$ in
$G$ and $\SSet$ is the set of sequents resulting from applying \kcase rule to
$x$ in $S$,
then $\GSet \roc \SSet$.
\end{lemma}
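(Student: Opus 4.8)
I would prove Lemma~\ref{lem:roc} by mimicking the structure of the proof of Lemma~\ref{lem:rsc} for the input-coverage case, since the statement is the exact analogue for output splitting and output coverage. The goal is to establish a one-to-one correspondence between the subgoals $\GSet$ produced by output splitting on $x$ in $G$ and the frontier sequents $\SSet$ produced by applying \kcase to $x$ in $S$, and to check that each corresponding pair is related by $\roc$.

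First I would fix an arbitrary subgoal $G_i \in \GSet$ arising from output splitting. By Definition~\ref{def:out_splitting}, such a $G_i$ comes from a constant $c:\lfTAbsa {\G_c} {A_c}$ in $\Si$ for which $A_x$ is an instance of $A_c$ via a most general substitution $\sigma'$ with $\lfSeq {\G_c^\prime} {\Si} {\sigma'} {\G_c}$, where $\sigma = ((c\, \G_c)[\sigma']/x, \sigma')$. The resulting subgoal is $\lfSeq {\G^I; \G^O_1, \G_c^\prime, \G^O_2[\sigma]} {\Si} {A[\sigma]} {\ktype}$. Next I would read off, from the \kcase rule (via the \ksiguni rule of Fig.~\ref{fig:case}), the frontier sequent $S_i$ obtained by splitting on $x$ in $S$ for the same constant $c$, which will have context $\G^I, \G^O_1, \G_c^\prime, \G^O_2[\sigma], D:A[\sigma]$ and right-hand formula $F[\sigma]$. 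The crux is then to verify that $G_i \roc S_i$ holds by unfolding Definition~\ref{def:roc}: the input/output contexts match, the type family occurrence $A[\sigma]$ aligns, and the extra hypothesis $D:A[\sigma]$ together with $F = \fBranch$ is precisely what $\roc$ requires.

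The main obstacle I anticipate is reconciling the two slightly different formulations of splitting: output splitting in Definition~\ref{def:out_splitting} is phrased using instance/matching terminology and a most general substitution $\sigma'$ on $\G_c$ only, whereas the \ksiguni rule is phrased using a full mgu of $(A_x = A_c,\, x = c\,\G_c)$ over $(\G_1, x:A_x, \G_c)$. I would need to argue that, under the output-freshness assumption and the proviso that $\sigma$ does not instantiate any variable of $\G^I,\G^O$ already in context, these two substitution mechanisms produce identical instantiated contexts and identical $A[\sigma]$; this is exactly the ``superfluous change'' remark preceding the lemma, which guarantees that splitting never touches variables outside the one being split on, so the two substitution descriptions coincide on the relevant fragment.

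Finally, having shown $G_i \roc S_i$ for a representative pair, I would note that the construction sets up a bijection between the constants $c$ that license an output split and the cases generated by \ksiguni, so the subgoals in $\GSet$ correspond one-to-one with the sequents in $\SSet$; applying the definition of $\roc$ on finite sets then yields $\GSet \roc \SSet$. The bulk of the work is definitional bookkeeping rather than any genuinely hard inference, so once the substitution-matching step above is settled the remainder is routine.
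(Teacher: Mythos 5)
Your proposal is correct and follows essentially the same route as the paper, which simply argues "similar to Lemma~\ref{lem:rsc}" and adds the one essential observation that the free variables of $F$ all lie in $\G^I$, which output splitting never instantiates, so $F[\sigma] = F = \fBranch$ as the $\roc$ relation requires. You reach that same point (somewhat indirectly, via your remark that $\sigma$ does not touch variables of $\G^I$), so the argument is sound; just make the conclusion $F[\sigma]=F$ explicit when you check $G_i \roc S_i$.
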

\begin{proof}
Similar to Lemma \ref{lem:rsc}. Note that all free variables in $F$
are bound in $\G^I$.
Since variables in $\G^I$ are not instantiated by splitting in output
coverage checking, they are also not instantiated by applying the
corresponding \kcase rule in the \M2 proof.
Thus $F$ remains the same after applying \kcase.
\end{proof}
By Lemma \ref{lem:roc}, we can prove a lemma similar to Lemma
\ref{lem:rscproof}, from which we derive an algorithm for translating
output coverage checking.
Let $\Sc{i-1}'$ be the frontier sequent resulting from applying
the inductive hypothesis (as shown in Figure
\ref{fig:proof_branch_app_ih}) and $G_{c_i}'$ be the output coverage
goal for $A_i$.
By the definition of $\roc$ and output coverage goals, we have
$G_{c_i}' \roc \Sc{i-1}'$.
Let $G_{c_i}$ be the output coverage patterns for $A_i$.
By Definition \ref{def:const_trans} and output coverage patterns, we have
$G_{c_i} \roc \Sc{i}$.
The output coverage checking performs a sequence of splitting
operations to $G_{c_i}'$ and returns a set of subgoal $\GSet'$
immediately covered by $G_{c_i}$.
Those splitting operations are translated into applications of \kcase
rules to the frontier sequent $\Sc{i-1}'$, resulting in a partial
proof with frontier sequents $\SSet'$ such that $\GSet' \roc \SSet'$.
To relate the frontier sequents in $\SSet'$ to $\Sc{i}$, we prove the
following lemma, which reflects the immediate coverage in output
coverage into equivalence between \M2 sequents up to renaming.
\begin{lemma}\label{lem:out_immed_cover_inst}
Given an output coverage goal $G^\prime$ and an output coverage
pattern $G$ such that $G^\prime$ is immediately covered by $G$, if $G
\roc S$ and $G^\prime \roc S^\prime$ then $S^\prime$ is the
instantiation of $S$ under $\si$ which is a substitution that only
renames variables.
\end{lemma}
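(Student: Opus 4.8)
The plan is to prove Lemma~\ref{lem:out_immed_cover_inst} by unfolding the definition of immediate output coverage (Definition~\ref{def:out_immed_cover}) and the $\roc$ relation (Definition~\ref{def:roc}), and then checking that the same substitution $\si$ that witnesses immediate coverage on the \LF side also witnesses the instantiation on the \M2 side. This mirrors the structure of Lemma~\ref{lem:in_immed_cover_inst} for input coverage, so the proof should be largely a matter of tracking how the contexts and types line up; the main work is making sure the variable-renaming substitution behaves correctly on both the \LF typing judgment and the surrounding \M2 sequent.

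First I would spell out the hypotheses concretely. Since $G^\prime$ is immediately covered by $G$, Definition~\ref{def:out_immed_cover} gives a substitution $\si$ that only renames variables with $\lfSeq {\G} {\Si} {\si} {\G^I,\G^O}$ and $M \equiv M^\prime[\si]$, where $G^\prime$ is $\lfSeq {\G^I;\G^O} {\Si} {M^\prime} {\ktype}$ and $G$ is $\lfSeq {\G^\prime} {\Si} M {\ktype}$. By $G \roc S$ and $G^\prime \roc S^\prime$ (Definition~\ref{def:roc}), the sequents are $S = \mSeq {\G^\prime, D:M} {\D} F$ and $S^\prime = \mSeq {\G^I,\G^O,D:M^\prime} {\D} F$, where in both cases $F = \fBranch$ and $\D$ contains the inductive hypothesis.

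Next I would compute the instantiation of $S$ under $\si$ using Definition~\ref{def:seq_inst}. Applying $\si$ to the context $\G^\prime, D:M$ yields $\G, D:M[\si]$, which by the typing judgment for $\si$ and the equivalence $M \equiv M^\prime[\si]$ coincides (up to the canonical-form identification noted after the typing rules) with $\G^I,\G^O,D:M^\prime$, the context of $S^\prime$. For the assumption set and the formula, the key observation — already exploited in the proof of Lemma~\ref{lem:roc} — is that all free variables of both $F$ and $\D$ (the inductive hypothesis $\IH$) lie in $\G^I$, and since $\si$ only renames variables and, being a most general renaming for output splitting, does not disturb $\G^I$, we get $\D[\si] = \D$ and $F[\si] = F$. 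Hence the instantiation of $S$ under $\si$ is precisely $S^\prime$.

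The step I expect to require the most care is justifying that $\si$ leaves $\G^I$, and therefore $F$ and $\D$, untouched. This is not immediate from Definition~\ref{def:out_immed_cover} alone, which only asserts $\si$ renames variables; I would need to appeal to the fact that output splitting (Definition~\ref{def:out_splitting}) never instantiates input variables — the same property invoked in Lemma~\ref{lem:roc} — so that the accumulated renaming witnessing immediate coverage restricts to the identity on $\G^I$. Once that invariant is in hand, matching the remaining components of $S[\si]$ and $S^\prime$ is routine, and the lemma follows directly from the definitions of immediate coverage, instantiation, and $\roc$.
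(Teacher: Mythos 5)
Your proof is correct and takes essentially the same route as the paper, which disposes of this lemma in two sentences by saying it is "similar to Lemma~\ref{lem:in_immed_cover_inst}" and relies on the restriction of immediate output coverage to renaming substitutions; your version usefully makes explicit the one point of substance, namely that the renaming fixes $\G^I$ (because output splitting never instantiates input variables) and therefore leaves $\D$ and $F$ unchanged, so the two sequents differ only by a renaming of the output context and of $D$'s type. The only thing to watch is the direction of $\si$: Definition~\ref{def:out_immed_cover} gives $M \equiv M^\prime[\si]$ with $\si$ mapping the goal's context into the pattern's, so strictly speaking the instantiation of $S$ is witnessed by the inverse renaming rather than by $\si$ applied to $S$'s context as you write --- a harmless slip since $\si$ is a bijective renaming, and one the paper itself glosses over.
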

\begin{proof}
Similar to Lemma \ref{lem:in_immed_cover_inst}.
Because we restricted the instantiation of immediate coverage to an
substitution that only renames variables in output coverage checking,
it is reflected into equivalence between \M2 sequents up to renaming.
\end{proof}
By applying Lemma \ref{lem:out_immed_cover_inst} to $\GSet' \roc
\SSet'$ and $G_{c_i} \roc \Sc{i}$, we have that for every $S' \in
\SSet'$, $S'$ is an instantiation of $\Sc{i}$ under a renaming
substitution.
The following lemma shows that instantiations of \M2 proofs under
renaming substitutions are also valid proofs.
\begin{lemma}\label{lem:proof_rename_inst}
Let $S'$ be an instantiation of $S$ under a renaming substitution, if
$S$ has a \M2 proof, then so does $S'$.
\end{lemma}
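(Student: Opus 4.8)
The plan is to proceed by induction on the structure of the \M2 derivation of $S$, applying the renaming uniformly throughout the tree. Writing $S$ as $\mSeq {\G} {\D} F$ and, following Definition~\ref{def:seq_inst}, $S'$ as $\mSeq {\G'} {\D[\si]} {F[\si]}$ with $\lfSeq {\G'} {\Si} {\si} {\G}$ a renaming, the central observation is that a renaming substitution is invertible and that substitution application commutes with it. Consequently every \LF judgment appearing as a side condition---typing judgments, definitional equalities, and the existence of most general unifiers---is preserved when $\si$ is applied, with the witnessing data merely renamed. The induction hypothesis will supply a proof of each instantiated premise, and the same \M2 rule reassembled with renamed side conditions will yield a proof of $S'$. (Since proof terms are being suppressed in this section, I would track only the sequents; the proof terms transform identically under $\si$.)

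First I would dispatch the rules that carry an explicit witnessing substitution. For \kexistsr the side condition $\lfSeq {\G} {\Si} {\theta} {\G_1}$ becomes $\lfSeq {\G'} {\Si} {\substComp {\theta}{\si}} {\G_1}$ by Lemma~\ref{lem:subst_comp}, and since renaming preserves well-typedness this immediately closes the branch with no appeal to the induction hypothesis. For \kfalll the rule's witness $\theta$ is similarly post-composed with $\si$; one checks that the newly introduced assumption $F_1[\theta][\si]$ equals $F_1[\substComp{\theta}{\si}]$, so it coincides with the assumption the renamed rule application records, after which the induction hypothesis applies to the single premise.

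The rules that introduce new variables---\kfallr, \kexistsl and \kcase---require extending $\si$ so that its domain also covers the eigenvariables or pattern variables appearing in the premise context. I would choose this extension to send the new variables to fresh names, so that the extended map is again a renaming and introduces no capture; applying the induction hypothesis under this extension and then re-applying the rule produces the desired derivation. The \krecur case is routine once one notes that a bijective renaming preserves the subterm termination ordering of Section~\ref{sec:terminate}, so the side condition that the recursion terminate in $x$ survives renaming.

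The main obstacle will be the \kcase rule, whose derivation of the auxiliary judgment ranges over every constant in $\Si$ and branches, via \ksignonuni and \ksiguni, according to whether the target type unifies with $A_x$. The key fact to establish here is that, because $\si$ is an invertible renaming, the unification problem $(A_x = A_c,\, x = c\,\G_c)$ is solvable exactly when its $\si$-image is, and an mgu of the renamed problem is the $\si$-renaming of an mgu of the original. This guarantees that the non-unifiable and unifiable branches are in exact correspondence before and after renaming, so the entire case tree transforms into a valid case tree for the renamed sequent, with the surviving branches closed by the induction hypothesis under the extended renaming described above.
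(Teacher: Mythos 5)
Your proposal is correct and follows essentially the same route as the paper: induction on the structure of the derivation with case analysis on the last rule, where the only delicate case is \kcase and is resolved by the observation that composing an mgu with an (invertible) renaming yields an mgu of the renamed problem, so the \ksiguni/\ksignonuni branching is preserved. The extra detail you give for the witness-carrying rules (post-composition via Lemma~\ref{lem:subst_comp}) and for extending the renaming over fresh eigenvariables matches what the paper leaves implicit as ``straightforward.''
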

\begin{proof}
By induction on the structure of the proof for $S$ and case analysis
of the rule applied to $S$.
The proof for most of cases is straightforward.
When the rule applied is \kcase, the proof relies on the fact that for a
unification problem, compositions of mgus with renaming substitutions
are still mgus.
\end{proof}

At this point, we recursively generate a proof for $\Sc{i}$ and apply
Lemma \ref{lem:proof_rename_inst} to get proofs for sequents in $\SSet'$.
This finishes the construction of the proof for $\Sc{i-1}$.

\subsubsection{Correctness of the Translation of Clauses}
The following lemma shows that proofs translated from clauses are valid:
\begin{lemma}\label{lem:trans_clause_correct}
Given a clause
$c: A \from A_1 \from ... \from A_m$,
the input coverage pattern $G$ derived from $c$ and the \M2 sequent
$S$ such that $G \rsc S$.
The algorithm for translating clauses to \M2 proofs generate an \M2
proof for $S$ from $c$.
\end{lemma}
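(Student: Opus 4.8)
The plan is to prove the statement by downward induction on $i$ (equivalently, induction on $m-i$), establishing that the recursive procedure of Section~\ref{sec:trans_proof_from_clauses} produces a valid \M2 proof for each sequent $\Sc{i}$ from Definition~\ref{def:const_trans}. Since $S = \Sc{0}$ by that definition, a proof of $\Sc{0}$ is exactly the conclusion we want. The induction hypothesis I would carry is: for a given $i$ with $0 \leq i \leq m$, the algorithm yields a valid \M2 proof of $\Sc{i}$.

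For the base case $i = m$, the algorithm applies the \kexistsr rule as in Figure~\ref{fig:proof_branch_base}. The only premise of \kexistsr is the typing judgment $\lfSeq {\Gc{m}} {\Si} {\si_m} {(\G^O[\siI], D:\lfAppTri a {\siI} {\G^O})}$, whose derivability is precisely Lemma~\ref{lem:const_output}. So the application is valid and gives a proof of $\Sc{m}$.

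For the inductive step, assuming a proof of $\Sc{i}$, I would build a proof of $\Sc{i-1}$ in two phases. First I apply \kfalll and \kexistsl as in Figure~\ref{fig:proof_branch_app_ih}; the side condition $\lfSeq {\Gc{i-1}} {\Si} {\si^I_i} {\G^I}$ of \kfalll holds because Lemma~\ref{lem:const_ctx_relate_mode}, combined with mode consistency, guarantees that every input variable of $A_i$ already lives in $\Gc{i-1}$, making $\si^I_i$ well-typed into $\G^I$. This produces the frontier sequent $\Sc{i-1}'$. When the output arguments of $A_i$ are all variables, $\Sc{i-1}'$ coincides with $\Sc{i}$ and the induction hypothesis directly supplies the needed proof. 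In the general case the outputs are compound, and I would perform the inversion driven by output coverage checking: since $G_{c_i}' \roc \Sc{i-1}'$ and $G_{c_i} \roc \Sc{i}$, the splittings witnessing coverage of $G_{c_i}'$ by $G_{c_i}$ translate, via Lemma~\ref{lem:roc} and the output-coverage analogue of Lemma~\ref{lem:rscproof}, into \kcase applications whose frontier sequents $\SSet'$ satisfy $\GSet' \roc \SSet'$. Lemma~\ref{lem:out_immed_cover_inst} then identifies each $S' \in \SSet'$ as an instantiation of $\Sc{i}$ under a renaming substitution, and Lemma~\ref{lem:proof_rename_inst} transports the inductively available proof of $\Sc{i}$ into a proof of each such $S'$. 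Grafting these proofs onto the open leaves completes the derivation of $\Sc{i-1}$.

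The hard part will not be the outer induction, which is essentially a mechanical assembly, but the bookkeeping in the inductive step that aligns the chain of $\roc$-relations, instantiation, and renaming so that the \kcase leaves are recognized as genuine renamings of $\Sc{i}$. This rests on the fact, used inside Lemma~\ref{lem:proof_rename_inst}, that composing a most general unifier with a renaming yields another most general unifier, which is what permits the proof of $\Sc{i}$ to be reused unchanged across the inversion. I also expect some care to be needed in verifying that the assumption sets $\D_i$ evolve correctly through the \kfalll/\kexistsl step and that the goal formula $F = \fBranch$ stays invariant throughout, as already noted in the proof of Lemma~\ref{lem:roc}.
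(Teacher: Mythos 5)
Your proposal is correct and follows essentially the same route as the paper's own proof: downward induction on $i$ over the sequents $\Sc{i}$ of Definition~\ref{def:const_trans}, with Lemma~\ref{lem:const_output} justifying the \kexistsr base case and Lemmas~\ref{lem:const_ctx_relate_mode}, \ref{lem:roc}, \ref{lem:out_immed_cover_inst} and \ref{lem:proof_rename_inst} handling the \kfalll/\kexistsl/\kcase steps and the grafting of the proof of $\Sc{i}$ onto the frontier sequents. Your added remarks on the bookkeeping of the $\roc$-chain and the renaming argument are consistent with, and slightly more explicit than, what the paper records.
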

\begin{proof}
We define \M2 sequents
$\mathset{\Sc{i} \sep 0 \leq i \leq m}$ as in Definition \ref{def:const_trans}.
By the definition we have $S = \Sc{0}$.
We prove that the algorithm constructs a correct proof for $S$ by
induction on $i$.
In the base case, Lemma \ref{lem:const_output} guarantees the
\kexistsr is correctly applied to get a proof for $\Sc{m}$.
In the inductive case, assume we have a correct proof for $\Sc{i}$ for
some $i$ such that $1 \leq i \leq m$, we prove that applications of
$\kfalll$, $\kexistsl$ and $\kcase$ rules to $\Sc{i-1}$ are correct.
By the definition of mode consistency in Definition \ref{def:mode_csst} and
Lemma \ref{lem:const_ctx_relate_mode}, the context $\Gc{i-1}$ in the
premise of \kfalll,
$\lfSeq {\Gc{i-1}} {\Si} {\si^I_i} {\G^I}$,
contains all variables in $\si^I_i$.
Thus this premise holds by \LF typing rules and \kfalll is correctly
applied.
The correctness of \kexistsl and \kcase rules is obvious from their
definitions.
By Lemma \ref{lem:roc} and Lemma \ref{lem:out_immed_cover_inst}, after
applying those rules we get a partial proof with frontier sequents
$\SSet$, which is completed by applying Lemma \ref{lem:proof_rename_inst}
to the proof for $\Sc{i}$.
\end{proof}

\subsubsection{Correctness of the Instantiation of Proofs}\label{sec:inst_branch}
As described in the beginning of Section
\ref{sec:gen_proof_from_clauses}, we have to instantiate the proof
generated from clauses to derive frontier sequents in the partial
proof resulting from case analysis steps.
In general, the provability of an \M2 sequent is not closed under
instantiations.
For instance, a \kcase rule might no longer be applicable after an
instantiation because the side condition for \kcase, that every
unification problem must either has an mgu or no solution, may no
longer hold after the instantiation.
In our case, an \M2 proof to be instantiated is translated from a
totality checked clause, which has a particular structure such that the
same rules can be applied to prove the instantiation of the \M2 sequent.
Specifically, it does not use \krecur rule and \kcase rules in it are
translated from splitting in output coverage checking, which only
involves matching problems instead of general unification problems.
From those observations, we can prove the following instantiation lemma.
\begin{lemma}\label{lem:proof_branch_inst}
If $\drv$ is an \M2 proof generated from clauses as described in Section
\ref{sec:trans_proof_from_clauses} for the sequent
$S = \mSeq {\G} {\D} F$,
then given any substitution
$\lfSeq {\G^\prime} {\Si} {\si} {\G}$
and an instantiation $S^\prime$ of $S$ under $\si$ there exists a
proof $\drv^\prime$ for $S^\prime$.
\end{lemma}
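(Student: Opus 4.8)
The plan is to argue by induction on the structure of $\drv$, which by the construction of Section~\ref{sec:trans_proof_from_clauses} is built from applications of \kexistsr, \kfalll, \kexistsl and \kcase only. I would show that each such application can be pushed through the substitution $\si$ to yield a corresponding application in a proof of $S'$, maintaining the invariant that at every sequent $\mSeq{\G}{\D}{F}$ of $\drv$ the (suitably extended) $\si$ is a well-typed substitution out of the instantiated context and the instantiated sequent is derivable by the image of the rule. As \kexistsl introduces fresh eigenvariables, I would extend $\si$ to act as the identity on them, renaming these variables away from the range of $\si$ to avoid capture.

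The \kexistsr step that closes a branch (the base case of the construction) is justified by Lemma~\ref{lem:const_output}, which provides a witness $\si_m$ with $\lfSeq{\Gc{m}}{\Si}{\si_m}{(\G^O[\siI], D:\lfAppTri{a}{\siI}{\G^O})}$; composing with $\si$ via Lemma~\ref{lem:subst_comp} yields $\substComp{\si_m}{\si}$ as a witness for the instantiated existential, so \kexistsr still fires. The \kfalll step realizing a recursive call carries the side condition $\lfSeq{\Gc{i-1}}{\Si}{\si^I_i}{\G^I}$; since the parameter context $\G^I$ does not mention the pattern variables in the domain of $\si$ we have $\G^I[\si] = \G^I$, and Lemma~\ref{lem:subst_comp} again supplies $\substComp{\si^I_i}{\si}$ as the instantiated witnessing substitution. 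The \kexistsl step simply introduces the fresh outputs of the recursive call and commutes with instantiation once those variables are renamed apart from $\si$.

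The main obstacle is the \kcase case, and here the hypothesis recorded before the lemma---that every \kcase occurring in $\drv$ is translated from \emph{output} coverage splitting and hence from a \emph{matching} problem rather than a general unification problem---is indispensable. Such a \kcase inverts a fresh output variable whose atomic type $A_x$ is matched against the clause-fixed output structure of a premise, with $\si$ entering only through the input variables sitting beneath that rigid structure. I would show that instantiating the rigid (target) side preserves the matching dichotomy of Definition~\ref{def:out_splitting}: for a constant $c:\lfTAbsa{\G_c}{A_c}$ whose variables $\G_c$ are fresh for $\si$, if $A_x = A_c[\sigma']$ then $A_x[\si] = A_c[\substComp{\sigma'}{\si}]$, so each matching branch survives with most general matcher $\substComp{\sigma'}{\si}$; and because $\si$ only refines arguments beneath the top constructors that the clause fixes in the output position, a constant that clashes with $A_x$ still clashes with $A_x[\si]$, so no new branch appears. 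Hence the same \kcase applies at $S'$, its surviving subgoals are precisely the $\substComp{\sigma'}{\si}$-instances of the subgoals of $\drv$, and the induction hypothesis finishes each branch.

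This is exactly the generalization of Lemma~\ref{lem:proof_rename_inst} from renaming substitutions to an arbitrary $\si$: where that lemma relied on compositions of mgus with renamings again being mgus, here the corresponding fact is that the composition of a most general matcher with $\si$ is again the most general matcher of the instantiated matching problem, together with the preservation of non-matching under refinement of the target. Assembling the four cases delivers the proof $\drv'$ of $S'$ and completes the induction. The delicate point, and the one I would spell out most carefully, is the claim that refinement of the target beneath the clause-fixed constructors neither destroys existing matches nor creates spurious ones.
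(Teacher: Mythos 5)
Your proposal is correct and follows essentially the same route as the paper's proof (Theorem~\ref{lem:proof_branch_inst_full} in the appendix): structural induction on $\drv$ with a case analysis on the last rule, the quantifier cases handled by composing witnessing substitutions via Lemma~\ref{lem:subst_comp}, and the crux being the \kcase case, where one shows that non-unifiable constants remain non-unifiable and that the composition of the splitting substitution with $\si$ is again a most general solution of the instantiated problem. The only difference is one of emphasis: you phrase the key fact in terms of most general matchers from output-coverage splitting, whereas the appendix proves the mgu-composition claim directly by explicitly constructing the factoring substitution---the detail you correctly identify as the delicate point but defer.
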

\begin{proof}
The proof is by induction on the structure of $\drv$ and case analysis of
the rule applied to $S$ in $\drv$.
The important case is for the \kcase rule, in which we prove that the
structure of the \kcase are maintained under instantiation.
Proof for other cases is straightforward.
See Theorem \ref{lem:proof_branch_inst_full} in the appendix for the
details.
\end{proof}

\subsection{Correctness of the \M2 Proof Generation}\label{sec:correctness}
We have shown that almost all applications of the \M2 rules in our
proof generation are correct, except for \krecur rule in the initial
step which has a side condition that the proof term must represent a
terminating computation.
Given an totality checked \LF signature $\Si$ and a type family
$\exmTypeFamily$ in $\Si$ such that $\G^I = x_1:A_1, ..., x_k:A_k$, an
\M2 proof for
$\mSeq {\emptyCol} {\emptyCol} {\mTm P F}$
is generated by the translation algorithm, where $F = \exmFormula$ and
$P = \mRecur x F
      {\mIAll {(x_1:A_1, ..., x_k:A_k)} {P^\prime}}$.
Intuitively, the translation algorithm should reflect the termination
ordering in totality checking to that in the proof term $P$.
Then the termination property of $P$ is guaranteed by the termination
checking.
To formally prove that, we need an instantiation lemma for termination
ordering.
The following lemma proves that subterm ordering holds under instantiation.
\begin{lemma}\label{lem:termination_order_inst}
Let $\subtermOrder$ be the subterm ordering between \LF terms.
If $M \subtermOrder N$ and $\si$ is an substitution, then $M[\si]
\subtermOrder N[\si]$.
\end{lemma}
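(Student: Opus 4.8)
The plan is to prove the claim by induction on the structure witnessing the strict subterm relation, showing that capture-avoiding substitution commutes with the term constructors that produce strict subterms. Since $M \subtermOrder N$ asserts that $M$ is a \emph{strict} subterm of $N$, the term $N$ cannot be a bare constant or variable: it must be an application $\lfApp {N_1} {N_2}$ or an abstraction $\lfOAbs x A {N_1}$ (or, at the type level, $\lfApp {B} {N_1}$ or $\lfTAbs x A B$). I would therefore split on the outermost constructor of $N$ and, within each split, on whether $M$ is an \emph{immediate} subterm of $N$ or a strict subterm of an immediate subterm.

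Consider the application case $N = \lfApp {N_1} {N_2}$. After renaming the bound variables of $N$ apart from the domain of $\si$, substitution distributes as $N[\si] = \lfApp {N_1[\si]} {N_2[\si]}$, so both $N_1[\si]$ and $N_2[\si]$ are strict subterms of $N[\si]$. If $M$ is an immediate subterm $N_i$, then $M[\si] = N_i[\si] \subtermOrder N[\si]$ directly; if $M$ is a strict subterm of some $N_i$, the induction hypothesis gives $M[\si] \subtermOrder N_i[\si]$, and transitivity of $\subtermOrder$ together with $N_i[\si] \subtermOrder N[\si]$ yields $M[\si] \subtermOrder N[\si]$. The abstraction case $N = \lfOAbs x A {N_1}$ is handled identically, using $N[\si] = \lfOAbs x {A[\si]} {N_1[\si]}$ and noting that $A[\si]$ and $N_1[\si]$ are its strict subterms; the type-level constructors are symmetric.

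The step I expect to be delicate is precisely the claim that substitution distributes cleanly over the constructors, i.e. that $N[\si]$ retains the outermost shape of $N$. Because we identify \LF terms with their canonical forms, instantiating a variable occurring in head position by a $\lambda$-abstraction would create a $\beta$-redex whose normalization could collapse the very subterm structure being tracked. The resolution I would give rests on the setting in which this ordering is actually used: the subterm ordering for termination is taken on derivation objects whose heads are object-level constants, not the variables in the domain of $\si$, so no head variable is instantiated and the canonical form of $N[\si]$ is exactly the constructor application displayed above. Moreover, the strictness condition governing such comparisons guarantees that a substituted variable occurs only with distinct bound variables as arguments, so normalization after substitution merely renames bound variables inside the affected position rather than erasing a tracked subterm. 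Once this structural stability is established, the remaining reasoning is the routine induction sketched above, with transitivity of $\subtermOrder$ closing each case.
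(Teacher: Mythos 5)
Your instinct about where the difficulty lies is exactly right, but your resolution of it opens a genuine gap: you end up proving a restricted statement rather than the lemma. Under your reading of $\subtermOrder$ as the naive structural subterm relation --- where $N$ is ``an application $\lfApp {N_1} {N_2}$ or an abstraction'' and both immediate subterms count --- the lemma as stated is simply false: take $N = \lfApp f M$ with $f$ a free variable and let $\si$ map $f$ to a vacuous abstraction $\lfOAbs y A c$; then $M \subtermOrder N$ on your reading, yet the canonical form of $N[\si]$ is the constant $c$, which has no strict subterms at all. You see this danger and patch it by appealing to how the ordering is used in termination checking (derivation objects with constant heads, strict occurrences of variables). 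But that appeal changes the statement being proved: the lemma quantifies over \emph{every} pair $M \subtermOrder N$ and \emph{every} substitution $\si$, with no side conditions about heads or strictness, and it is invoked in that unconditional form in the proof of the termination lemma for \krecur. A proof that holds only ``in the setting in which this ordering is actually used'' does not discharge it.

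The correct resolution lives in the definition of $\subtermOrder$, not in its usage. The ordering is not the naive structural subterm relation; it is given by derivation rules in \cite{rohwedder96esop} that, by construction, never descend through an application whose head is a free variable (and that handle abstraction bodies via the standard parameter/capture-avoidance bookkeeping). With that definition your counterexample is excluded by fiat, substitution commutes with each individual rule, and the lemma holds unconditionally. This is precisely the paper's proof: a straightforward induction on the derivation rules for $\subtermOrder$, with no assumptions imported from mode checking or strictness. Concretely, your induction should be on the derivation of $M \subtermOrder N$ rather than on the outermost constructor of $N$; in that induction the application case has a constant head automatically, your transitivity argument goes through as you sketched, and the only case needing care is the abstraction case, where the substitution must avoid the bound variable --- a routine renaming, not the beta-collapse problem you were forced to argue away.
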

\begin{proof}
The formal definition of $\subtermOrder$ is described in
\cite{rohwedder96esop}.
The proof is by a straightforward induction on the derivation rules
for $\subtermOrder$.
\end{proof}
By this lemma, the same property can be proved easily for
lexicographical and simultaneous ordering.
These are all orderings supported by \M2.
Then we prove that the termination property holds for $P$.
\begin{lemma}\label{lem:recur_term}
If $P = \mRecur x F {\mIAll {(x_1:A_1, ..., x_k:A_k)} {P^\prime}}$ is
the proof term generated by the translation algorithm, then $P$
terminates in $x$.
\end{lemma}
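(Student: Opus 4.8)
The plan is to show that every recursive call in $P$---that is, every use of the inductive hypothesis $x$---supplies a termination argument that is strictly smaller, in the ordering fixed by the \ktotal declaration, than the term standing for $x$ at that point; this is exactly the content of the termination condition attached to \krecur, which is made formal in Definition 7.8 of \cite{schurmann00phd} using the termination ordering of Section \ref{sec:terminate}. First I would locate the recursive-call sites. In the generated proof, $x$ is introduced by \krecur at the very top and the input parameters $x_1,\ldots,x_k$ are introduced immediately afterward by \kfallr. The only occurrences of $x$ are in the proof branches translated from clauses, where the inductive hypothesis is applied by the \kfalll step of Figure \ref{fig:proof_branch_app_ih}; for the premise $A_i$ of a clause $c$ this step instantiates $x$ with the substitution $\si^I_i = (M_{i_1}/x_1,\ldots,M_{i_k}/x_k)$ that carries the input arguments of $A_i$.

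Next I would connect the termination arguments at these sites to what termination checking already verifies. Suppose the \ktotal declaration designates the $j$-th input as the termination argument; the argument extends to lexicographic and simultaneous orderings via the remark following Lemma \ref{lem:termination_order_inst}. The case analysis steps preceding each clause branch instantiate the eigenvariable $x_j$ to the corresponding head input $M_j$ of $c$, so the term standing for the recursion argument in that branch is precisely $M_j$, while the recursive call for $A_i$ supplies $M_{i_j}$. Because the signature has passed termination checking as described in Section \ref{sec:terminate}, for every clause and every premise we have $M_{i_j} \subtermOrder M_j$. Hence each recursive call is made on a strictly smaller argument, which is exactly what the termination condition requires.

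Finally I would account for the instantiation step of Section \ref{sec:inst_branch}: a proof branch generated from a clause is not used verbatim but is instantiated (Lemma \ref{lem:proof_branch_inst}) by the substitution $\si$ witnessing immediate coverage of the corresponding input-coverage goal. Applying $\si$ turns the pair $M_{i_j}, M_j$ into $M_{i_j}[\si], M_j[\si]$, and Lemma \ref{lem:termination_order_inst} yields $M_{i_j}[\si] \subtermOrder M_j[\si]$, so the strict ordering is preserved. Collecting these observations, every recursive call in the instantiated $P$ strictly decreases the designated argument, and therefore $P$ terminates in $x$.

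The hard part will be matching this informal picture to the formal definition of ``terminates in $x$'' from \cite{schurmann00phd}, which tracks the term associated with the recursion variable through the intervening \kfallr and \kcase steps. One must verify that the value of the designated eigenvariable at each recursive-call site is indeed the instantiated head input produced by the input case analyses, and that the output-coverage \kcase steps internal to a branch never touch the input variables on which termination is measured. This last point follows from output freshness (Definition \ref{def:output_freshness}) together with the observation, noted after Lemma \ref{lem:roc}, that variables in $\G^I$ are never instantiated by output splitting and hence not by the corresponding \kcase applications.
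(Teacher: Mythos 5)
Your proposal is correct and takes essentially the same route as the paper's own proof: both identify each recursive call with a premise $A_i$ of a clause, observe that termination checking gives the subterm (or lexicographic/simultaneous) ordering between the premise's input arguments and the head's, note that the case-analysis substitutions instantiate $x_1,\ldots,x_k$ to the instantiated head inputs, and invoke Lemma \ref{lem:termination_order_inst} to carry the ordering through the instantiating substitution. Your closing remarks on tracking the recursion variable through the intervening rules and on output-coverage splits not touching input variables make explicit points the paper leaves implicit, but do not change the argument.
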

\begin{proof}
Suppose every recursive call $\lfAppFour x {M_1} {...} {M_k}$ in the
proof term occurs in a position such that $x_1, ..., x_k$ are
instantiated by $\si=\substComp {\substComp {\si_1} {...}} {\si_m}$
where $\si_i$ for $1 \leq i \leq m$ are substitutions in \kcase rules
along the execution path and $\si$ is their compositions, we have to
prove that $M_1,...,M_k$ are smaller than $x_1[\si],...,x_k[\si]$
relative to some termination ordering.
In our translated proofs, a recursive call corresponds to a premise
$A$ in some clause $c$.
Let $M_1,...,M_k$ be the input arguments to $A$ and $M_1^\prime, ...,
M_k^\prime$ be the input arguments to the head of $c$.
By termination checking we have $M_1,...,M_k$ is smaller than
$M_1^\prime, ..., M_k^\prime$ relative to the termination ordering.
In the proof term $P'$, the recursive call occurs in a position where
$x_1, ..., x_n$ are instantiated by the \kcase rules translated from
input coverage checking.
Letting $\si$ be the composition of substitutions in those \kcase
rules, we have $x_i[\si] = M_i^\prime[\si^\prime]$ for $i$ such that
$1 \leq i \leq k$, where $\si^\prime$ is the substitution for
instantiating the proof translated from $c$, as described in
Section \ref{sec:gen_proof_from_clauses}.
By the translation algorithm, this recursive call has the form
$\lfAppFour x {(M_1[\si^\prime])} {...} {(M_k[\si^\prime])}$.
Applying Lemma \ref{lem:termination_order_inst} to the order between
$M_1,..,M_k$ and $M_1^\prime,..,M_k^\prime$, we get
${(M_1[\si^\prime])} {...} {(M_k[\si^\prime])}$ is smaller than
$x_1[\si], ..., x_k[\si]$ relative to the termination ordering.
\end{proof}

Then we have our main theorem:
\begin{theorem}
Given an \LF signature $\Si$ that is totality checked without using
contexts and lemmas,
for every type family
$a : \lfTAbsa {\G^I} {\lfTAbsa {\G^O} {\top}}$
in $\Si$, the proof generation algorithm generates an \M2 proof for
$\mSeq {\emptyCol} {\emptyCol}
       {\mTm P
             {\mFall {\G^I}
               {\mExst {\G^O}
                 {\mExst {D:\lfAppTri a {\G^I} {\G^O}} {\top}}}}}$.
\end{theorem}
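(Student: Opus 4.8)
The plan is to assemble the theorem from the segment-by-segment results established in the preceding subsections, following the three-phase structure of Fig.~\ref{fig:overview}: an initial step, the case analysis steps, and the generation of proofs from clauses. The target derivation of $\mSeq {\emptyCol} {\emptyCol} {\mTm P F}$ with $F = \exmFormula$ is constructed from the bottom up, and the proof term $P$ is read off as its output. Since every individual rule application has already been shown locally correct, the work of the proof is to exhibit the assembly and to discharge the one global obligation, the termination side condition on \krecur.

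First I would apply the initial step of Fig.~\ref{fig:init}: one use of \krecur to add the inductive hypothesis $\IH$ as an assumption, followed by \kfallr to move the variables of $\G^I$ into the \LF context. This leaves a single frontier sequent standing in the $\rsc$ relation to the initial input coverage goal $\lfSeq {\G^I} {\Si} {\lfApp a {\G^I}} {\lfTAbsa {\G^O} {\ktype}}$. Next, because $\Si$ is totality checked, input coverage supplies a finite sequence of splittings reducing that goal to a set $\GSet'$ of immediately covered subgoals; by iterating Lemma~\ref{lem:rscproof} each splitting is mirrored by an application of \kcase, so the case analysis steps produce a partial proof whose frontier sequents $\SSet'$ satisfy $\GSet' \rsc \SSet'$.

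It then remains to close each frontier. For $G_i \in \GSet'$ immediately covered, under some substitution $\si$, by the input coverage pattern $G$ arising from a clause $c$, Lemma~\ref{lem:trans_clause_correct} furnishes a proof $\drv$ of the sequent $S$ with $G \rsc S$; Lemma~\ref{lem:in_immed_cover_inst} identifies the corresponding frontier sequent $S_i$ (with $G_i \rsc S_i$) as the instantiation of $S$ under $\si$; and Lemma~\ref{lem:proof_branch_inst} then yields a proof $\drv'$ of $S_i$. Plugging each $\drv'$ into its frontier completes the tree, and the resulting root derivation witnesses the claimed sequent.

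The main obstacle is the termination side condition on \krecur, the sole place where a property of the whole assembled term, rather than a local rule application, must be verified. The difficulty is that the recursive calls occurring in $P$ are not present verbatim in the clause proofs: they appear only after those proofs are instantiated, via Lemma~\ref{lem:proof_branch_inst}, to match the frontier sequents left by case analysis. Hence termination cannot follow from termination checking alone; I must show that the ordering witnessed between a premise's inputs and its clause head's inputs survives the composition of the \kcase substitutions accumulated along an execution path. Lemma~\ref{lem:termination_order_inst}, stating that the subterm ordering (and, by the remark following it, the lexicographical and simultaneous orderings) is preserved under substitution, is precisely the bridge for this, and feeding it into Lemma~\ref{lem:recur_term} lets me conclude that $P$ terminates in $x$, so the initial \krecur is legitimate and the derivation is valid.
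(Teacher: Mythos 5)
Your proposal is correct and follows essentially the same route as the paper's own proof: it assembles the derivation from the initial step, the \kcase applications justified by Lemma~\ref{lem:rscproof}, the clause proofs from Lemma~\ref{lem:trans_clause_correct} instantiated via Lemmas~\ref{lem:in_immed_cover_inst} and~\ref{lem:proof_branch_inst}, and discharges the \krecur side condition by Lemma~\ref{lem:recur_term}. The only difference is presentational -- you give a more explicit bottom-up narrative of the assembly and spell out the role of Lemma~\ref{lem:termination_order_inst}, which the paper folds into the proof of Lemma~\ref{lem:recur_term}.
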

\begin{proof}
We prove that every application of an \M2 rule is valid.
\begin{enumerate}
\item
Obviously, the application of \kfallr in the initial step is correct.
\item
The applications of \kcase rules translated from input coverage
checking are correct by Lemma \ref{lem:rscproof}.
\item
The translation from clauses to \M2 proofs are correct by
Lemma \ref{lem:trans_clause_correct}.
By instantiating those proofs the frontier sequents resulting
from translating input coverage checking are proved, which is proved
by Lemma \ref{lem:in_immed_cover_inst}.
The instantiated proofs are valid by Lemma \ref{lem:proof_branch_inst}.
\item
Finally, the application of \krecur rule in the initial step
is correct by Lemma \ref{lem:recur_term}.
\end{enumerate}
Since all the proof rules are correctly applied, we get
an \M2 proof for $\exmEndSequent$.
\end{proof}

\section{Conclusions and Future Works}\label{sec:conc}
This paper has described a method for transforming totality checking
in a restricted version of \Twelf into explicit proofs in the logic
\M2.
We have also proved this method correct.
Towards this end, we have adapted arguments for the correctness of
particular components of totality checking into an argument for the
correctness of the generated \M2 proof.
Since \M2 is a consistent formal logic with established
meta-properties, the existence of such a transformation boosts our
confidence in totality checking.
More importantly, the transformation yields explicit objects that can
be traded as proof certificates and whose correctness can be checked
independently of the procedure that generated them.

As we have noted already, we have not considered the entire class of
specifications for which Twelf supports totality checking.
One limitation is that we have not considered the proof of properties
that are parameterized by changing signatures that adhere to
constraints that are finitely described via regular worlds
descriptions in \Twelf.
Another limitation is that we have not allowed for clauses that
contain calls to predicates other than the predicate being defined.
We believe our work can be extended to treat totality checking in this
more general setting.
However, we will need a richer target logic than \M2.
In particular, the proofs we produce will have to be in the logic
\MTwoPlus that extends \M2 with judgments with \emph{generalized
  contexts} and that contains rules for lemma applications.
We plan to examine this issue more carefully in the near future.

Another direction of ongoing inquiry is the correspondence between the
formalization and validation of meta-theorems in the \LF/\Twelf framework
and in the two-level logic framework of \Abella \cite{gacek12jar}.
It has previously been shown that specifications in \LF can be
translated into clause definitions in the specification logic used in
Abella in a way that preserves essential structure \cite{snow10ppdp}.
By exploiting this correspondence, it appears possible to transform
\M2 (and possibly \MTwoPlus) proofs over \LF specifications into
\Abella proofs over the related \Abella specifications.
The results of this paper and its extension would then yield a path to
proofs in another tried and tested logical setup.

\medskip
\noindent%
\emph{Acknowledgments}:%
This work has been partially supported by the NSF Grant CCF-0917140.
Opinions, findings, and conclusions or recommendations expressed in this paper
are those of the authors and do not necessarily reflect the views of the
National Science Foundation.

\bibliographystyle{abbrv}
\bibliography{../../references/master}

\clearpage
\appendix

\section{Instantiation Theorem for \M2}

\begin{theorem}\label{lem:proof_branch_inst_full}
If $\drv$ is an \M2 proof generated from clauses as described in Section
\ref{sec:trans_proof_from_clauses} for the sequent
$S = \mSeq {\G_1, \G_2} {\D} F$
where $\G_1$ contains input parameters and $\G_2$ contains output
parameters, then given any substitution
$\lfSeq {\G_1^\prime} {\Si} {\si_1} {\G_1}$
there exists a proof for
$S^\prime = \mSeq{\G_1^\prime, \G_2[\si_1]} {\D[\si_1]} {F[\si_1]}$.
\end{theorem}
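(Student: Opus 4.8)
The plan is to prove the theorem by induction on the structure of the proof $\drv$, with a case analysis on the rule applied at its root. Two preliminary observations make this tractable. First, since $\drv$ is generated from a clause exactly as in Section~\ref{sec:trans_proof_from_clauses}, it uses only the rules \kexistsr, \kfalll, \kexistsl and \kcase; in particular it contains no \krecur or \kfallr. Second, as already noted in the proof of Lemma~\ref{lem:roc}, every free variable of $F$ lies in the input context, so $F[\si_1]$ is well-defined and insensitive to how $\si_1$ is extended to the output variables of $\G_2$. Throughout I will extend $\si_1$ to a substitution $\lfSeq {\G_1^\prime, \G_2[\si_1]} {\Si} {\hat\si_1} {\G_1, \G_2}$ that acts as the identity on $\G_2$, so that the instantiated context $\G_1^\prime, \G_2[\si_1]$ is precisely its domain. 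The inductive invariant to maintain is that input parameters stay in the left block, output parameters in the right, and $\hat\si_1$ is the identity on the right.

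The three quantifier cases are routine. For \kexistsr, the premise provides a witnessing substitution into the existential block, and composing it with $\hat\si_1$ via Lemma~\ref{lem:subst_comp} yields a witness for the instantiated block; since the input context is closed the rule reapplies unchanged. For \kfalll, the side premise $\lfSeq {\G_1, \G_2} {\Si} {\si^I_i} {\G^I}$ is preserved: Lemma~\ref{lem:subst_comp} gives $\lfSeq {\G_1^\prime, \G_2[\si_1]} {\Si} {\substComp {\si^I_i} {\hat\si_1}} {\G^I[\hat\si_1]}$, and $\G^I[\hat\si_1] = \G^I$ because the type family's input context is closed, so the main premise follows from the induction hypothesis. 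The \kexistsl case merely enlarges the output portion of the context with the variables introduced for the result of a recursive call; extending $\si_1$ by the identity on these new output variables, the induction hypothesis applies directly.

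The main obstacle is the \kcase case, where I must show that the case structure survives instantiation. In a general \M2 proof this can fail, since the side condition of \kcase requires that the target type of every signature constant either unify with the type $A_x$ of the split variable or fail to unify, and an instantiation can manufacture new unifiers. What rescues us is exactly the restriction recorded in Section~\ref{sec:inst_branch}: every \kcase in $\drv$ is translated from an output-coverage split, so the underlying problem is a \emph{matching} problem in which only the fresh variables $\G_c$ of the constant are instantiated while the split context stays rigid. The split variable $x$ lies in $\G_2$, hence $x[\hat\si_1] = x$ and it remains available for splitting, with only its type becoming $A_x[\si_1]$. Since $\hat\si_1$ acts only on input variables, its domain is disjoint from the matcher $\sigma = mgu(A_x = A_c, x = c\,\G_c)$ and neither range reintroduces the other's domain, so the two commute; for a constant whose target matches $A_x$, $\substComp {\sigma} {\hat\si_1}$ is a most general matcher of the instantiated problem, and for a constant whose target does not match $A_x$ the clash sits at a position carrying a rigid symbol of $A_x$ that $\si_1$ cannot disturb, so non-matching is preserved. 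Hence the instantiated \kcase has one subgoal per subgoal of the original, each (up to the commutation of $\sigma$ and $\hat\si_1$) the instantiation of the corresponding original subgoal, and the induction hypothesis finishes the case.

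The technical heart, and the step I expect to demand the most care, is precisely the stability of matching decisions under $\si_1$: that a constant matches $A_x[\hat\si_1]$ iff it matches $A_x$, with the matchers related by composition with $\hat\si_1$. This rests on the rigidity of the split context together with the strictness conditions that guarantee the existence of most general matchers in \LF, and it must be paired with the commutation identity $A[\sigma][\hat\si_1] = A[\hat\si_1][\substComp{\sigma}{\hat\si_1}]$, which holds because $\sigma$ ranges over $\G_c$ alone and is disjoint from $\hat\si_1$. Establishing these facts precisely, and tracking the resulting bookkeeping across the subgoals so that the induction hypothesis can be invoked at each, is where the bulk of the delicate-but-routine work lies; the remainder of the argument is a direct structural recursion.
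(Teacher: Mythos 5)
Your proposal follows essentially the same route as the paper's appendix proof: structural induction on $\drv$ with a case split on the last rule, the quantifier cases discharged by extending $\si_1$ with the identity on output variables and composing substitutions via Lemma~\ref{lem:subst_comp}, and the \kcase case handled by arguing that, because the split variable is an output variable and $\si_1$ touches only input variables, non-unifiability is preserved and the composed substitution remains a most general solution of the instantiated problem. The paper carries out in full the mgu verification that you defer, but you correctly identify it as the crux and give the right reason it succeeds.
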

\begin{proof}
By induction on the structure of $\drv$.
According to the last rule used in $\drv$, there are the
following cases:
\noindent \textbf{Case:} \kfalll. 
The proof looks like
\begin{smallgather}
  \infer[\kfalll]{
    \mSeq {\G_1, \G_2} {\D_1, \mFall {\G_3} {F_3}, \D_2} {F_1}
  }{
    \lfSeq {\G_1, \G_2} {\Si} {\si} {\G_3}
    &
    \infer[]{
      \mSeq {\G_1, \G_2} {\D_1, \mFall {\G_3} {F_3}, \D_2, F_3[\si]} {F_1}
    }{
      \drv^{\prime}
    }
  }
\end{smallgather}
Applying I.H to $\drv^{\prime}$, we get
\begin{smallgather}
  \infer[]{ 
    \mSeq {\G_1^\prime, \G_2[\si_1]}
          {\D_1[\si_1], (\mFall {\G_3} {F_3})[\si_1], \D_2[\si_1], F_3[\si][\si_1]}
          {F_1[\si_1]}
  }{
    \drv^{\prime\prime}
  }
\end{smallgather}
Since $\mFall {\G_3} {F_3}$ is a closed term (an inductive hypothesis got from
applying \krecur), we have
$(\mFall {\G_3} {F_3})[\si_1] = \mFall {\G_3} {F_3}$.
The proof above is equivalent to
\begin{smallgather}
  \infer[]{ 
    \mSeq {\G_1^\prime, \G_2[\si_1]}
          {\D_1[\si_1], \mFall {\G_3} {F_3}, \D_2[\si_1], F_3[\si][\si_1]}
          {F_1[\si_1]}
  }{
    \drv^{\prime\prime}
  }
\end{smallgather}
Let $\si_1^\prime = (\si_1, \si_{id})$ where $\si_{id}$ is an identity
substitution such that
$\lfSeq {\G_1^\prime, \G_2[\si_1]} {\Si} {\si_1^\prime} {\G_1, \G_2}$.
Applying Lemma. \ref{lem:subst_comp} to 
$\lfSeq {\G_1^\prime, \G_2[\si_1]} {\Si} {\si_1^\prime} {\G_1, \G_2}$
and
$\lfSeq {\G_1, \G_2} {\Si} {\si} {\G_3}$, 
we get
$\lfSeq {\G_1^\prime, \G_2[\si_1]} {\Si} {\substComp {\si} {\si_1^\prime}} {\G_3[\si_1^\prime]}$.
Since $\G_3$ is closed,  $\G_3[\si_1^\prime] = \G_3$.
We apply \kfalll to conclude this case, where
$\drv^{\prime\prime\prime}$ is the derivation for
$\lfSeq {\G_1^\prime, \G_2[\si_1]} {\Si} {\substComp {\si}{\si_1^\prime}} {\G_3}$:
\begin{smallgather}
  \infer[\kfalll]{
    \mSeq {\G_1^\prime, \G_2[\si_1]}
          {\D_1[\si_1], \forall \G_3.F_3, \D_2[\si_1]}
          {F_1[\si_1]}
  }{
    \drv^{\prime\prime\prime}
    &
    \infer[]{
      \mSeq {\G_1^\prime, \G_2[\si_1]} 
            {\D_1[\si_1], \forall \G_3.F_3, \D_2[\si_1], F_3[\si][\si_1]}
            {F_1[\si_1]}
    }{
      \drv^{\prime\prime}
    }
  }
\end{smallgather}

\noindent \textbf{Case:} \kfallr.
The proof looks like:
\begin{smallgather}
  \infer[\kfallr]{
    \mSeq {\G_1,\G_2} {\D} {\mFall {\G_3} F}
  }{
    \infer[]{
      \mSeq {\G_1,\G_2,\G_3} {\D} F
    }{
      \drv^{\prime}
    }
  }
\end{smallgather}
By I.H, we have
\begin{smallgather}
  \infer[]{
    \mSeq {\G_1^\prime, \G_2[\si_1], \G_3[\si_1]} {\D[\si_1]} {F[\si_1]}
  }{
    \drv^{\prime\prime}
  }
\end{smallgather}
We conclude this case by applying \kfallr:
\begin{smallgather}
  \infer[\kfallr]{
    \mSeq {\G_1^\prime, \G_2[\si_1]} {\D[\si_1]} {(\mFall {\G_3} F)[\si_1]}
  }{
    \infer[]{
      \mSeq {\G_1^\prime, \G_2[\si_1], \G_3[\si_1]} {\D[\si_1]} {F[\si_1]}
    }{
      \drv^{\prime\prime}
    }
  }
\end{smallgather}

\noindent \textbf{Case:} \kexistsl.
The proof looks like:
\begin{smallgather}
  \infer[\kexistsl]{
    \mSeq {\G_1, \G_2} {\D_1, \mExst {\G_3} {\top}, \D_2} F
  }{
    \infer[]{
      \mSeq {\G_1, \G_2, \G_3} {\D_1, \mExst {\G_3} {\top}, \D_2} F
    }{
      \drv^{\prime}
    }
  }
\end{smallgather}
By I.H., we have
\begin{smallgather}
  \infer[]{
    \mSeq {\G_1^\prime, \G_2[\si_1], \G_3[\si_1]}
          {\D_1[\si_1], \mExst {\G_3[\si_1]} {\top}, \D_2[\si_1]}
          {F[\si_1]}
  }{
    \drv^{\prime\prime}
  }
\end{smallgather}
We apply $\kexistsl$ to conclude this case:
\begin{smallgather}
  \infer[\kexistsl]{
    \mSeq {\G_1^\prime, \G_2[\si_1]}
          {\D_1[\si_1], \mExst {\G_3[\si_1]} {\top}, \D_2[\si_1]}
          {F[\si_1]}
  }{
    \infer[]{
      \mSeq {\G_1^\prime, \G_2[\si_1], \G_3[\si_1]}
            {\D_1[\si_1], \mExst {\G_3[\si_1]} {\top}, \D_2[\si_1]}
            {F[\si_1]}
    }{
      \drv^{\prime\prime}
    }
  }
\end{smallgather}

\noindent \textbf{Case:} \kexistsr.
The proof looks like:
\begin{smallgather}
  \infer[\kexistsr]{
    \G_1,\G_2; \D \m2seq \exists \G_3.\top
  }{
    \G_1, \G_2 \lfseq \si : \G_3
  }
\end{smallgather}
Let $\si_1^\prime = (\si_1, \si_{id})$ where $\si_{id}$ is an identity
substitution such that
$\lfSeq {\G_1^\prime, \G_2[\si_1]} {\Si} {\si_1^\prime} {\G_1, \G_2}$.
Applying Lemma. \ref{lem:subst_comp} to 
$\lfSeq {\G_1^\prime, \G_2[\si_1]} {\Si} {\si_1^\prime} {\G_1, \G_2}$
and
$\lfSeq {\G_1, \G_2} {\Si} {\si} {\G_3}$, 
we have
$\lfSeq {\G_1^\prime, \G_2[\si_1]} {\Si} {\substComp {\si} {\si_1}} {\G_3[\si_1^\prime]}$.
Note that $\G_3[\si_1^\prime] = \G_3[\si_1]$.
We apply \kexistsr to conclude this case:
\begin{smallgather}
  \infer[\kexistsl]{
    \G_1^\prime, \G_2[\si_1]; \D[\si_1] \m2seq \exists \G_3[\si_1]. \top
  }{
    \G_1^\prime, \G_2[\si_1] \lfseq \si[\si_1] : \G_3[\si_1]
  }
\end{smallgather}

\noindent \textbf{Case:} \kcase. 
This is the only non-trivial case.
The proof looks like:
\begin{smallgather}
  \infer[\kcase]{
    \mSeq {\G_1, x:A_x, \G_2} {\D} {F}
  }{
    \infer[]{
      \mcSeq {\G_1, x:A_x, \G_2} {\D} {\Si} {F}
    }{
      \drv^{\prime}
    }
  }
\end{smallgather}
Since $\si_1$ only instantiate input variables and \kcase rules are applied
only on output variables, $x$ and variables in $\G_2$ will not be instantiated
by $\si_1$ (but their types will).
For a constant $c:\lfTAbsa {\G_c} {A_c}$ in $\Si$, we solve the
unification problem $(A_c = A_x, x = \lfApp c {\G_c})$.
If it does not have a solution, \ksignonuni is applied.
If it has an mgu, \ksiguni is applied.
If it has an unifier but not an mgu, \kcase rule is not applicable.
Thus in a correct proof such as $\drv$, the last situation does rise.
We prove that after the instantiation, the same rule can be applied to
get a correct application of \kcase rule.

The application of case rule after unification looks like:
\begin{smallgather}
  \infer[\kcase]{
    \mSeq {\G_1^\prime, x:A_x[\si_1], \G_2[\si_1]} {\D[\si_1]} {F[\si_1]}
  }{
    \mcSeq {\G_1^\prime, x:A_x[\si_1], \G_2[\si_1]} {\D[\si_1]} {\Si} {F[\si_1]}
  }
\end{smallgather}
For a constant $c:\lfTAbsa {\G_c} {A_c}$ in $\Si$, the unification
problem becomes $(A_x[\si_1] = A_c, x = \lfApp c {\G_c})$, which is
equivalent to $(A_x = A_c, x = \lfApp c {\G_c})[\si_1]$ since $\si_1$
does not contain variables in $\lfTAbsa {\G_c} {A_c}$.
If $(A_x = A_c, x = \lfApp c {\G_c})$ does not have a solution,
so does $(A_x[\si_1] = A_c, x = \lfApp c {\G_c})$.
Thus we can apply the same \ksignonuni rule in this case.
If $(A_x = A_c, x = \lfApp c {\G_c})$ has an mgu $\si$, according to
the translation algorithm the original proof looks like:
\begin{smallgather}
\infer[\ksiguni]{
  \mcSeq
      {\G_1, x:A_x, \G_2}
      {\D}
      {\Si,c:\lfTAbsa {\G_c} {A_c}}
      F
}{
  \infer[]{
    \mSeq {\G^\prime, \G_2[\sigma]} {\D[\sigma]} {F[\sigma]}
  }{
    \drv_1
  }
  &
  \infer[]{
    \mcSeq {\G_1,x:A_x,\G_2} {\D} {\Si} F
  }{
    \drv_2
  }
}
\\
\lfSeq {\G^\prime} {} {\sigma} {(\G_1, x:A_x, \G_c)} \text{ where }
\sigma = mgu(A_x = A_c, x= c\, \G_c)  
\end{smallgather}
Because this is translated from splitting in output coverage checking,
we have $\si = (\si_{id}, (\lfApp c {\G_c})[\si_c]/x, \si_c)$ and
$\G^\prime = \G_1, \G_c^\prime$ according to
Def. \ref{def:out_splitting}, where $\si_{id}$ is an identity
substitution for $\G_1$ and $\si_c$ is the substitution for $\G_c$.
Since $\G_c$ is disjoint from $\G_1$, we have $\lfSeq {\G_1, \G_c'}
{\Si} {\si_c} {\G_c}$.
The proof above is equivalent to 
\begin{smallgather}
\infer[\ksiguni]{
  \mcSeq
      {\G_1, x:A_x, \G_2}
      {\D}
      {\Si,c:\lfTAbsa {\G_c} {A_c}}
      F
}{
  \infer[]{
    \mSeq {\G_1, \G_c^\prime, \G_2[\si]} {\D} {F}
  }{
    \drv_1
  }
  &
  \infer[]{
    \mcSeq {\G_1,x:A_x,\G_2} {\D} {\Si} F
  }{
    \drv_2
  }
}  
\end{smallgather}
By applying I.H. to $\drv_1$, we get the following proof:
\begin{smallgather}
\infer[]{
  \mSeq {\G_1^\prime, \G_c', \G_2[\si][\si_1]} {\D[\si_1]} {F[\si_1]}
}{
  \drv^{\prime\prime}
}
\end{smallgather}
Let $\si_1' = (\si_1, \si_{id})$ such that
$\lfSeq {\G_1', \G_c'} {\Si} {\si_1'} {(\G_1, \G_c')}$,
we have $\lfSeq {\G_1', \G_c'} {\Si} {\substComp {\si_c} {\si_1'}} {\G_c}$.
Let 
\begin{smallgather}
\si' = (\si_{id}, (\lfApp c {\G_c})[\substComp {\si_c}{\si_1'}]/x, \substComp {\si_c} {\si_1'})
\end{smallgather}
such that $\lfSeq {\G_1', \G_c'} {\Si} {\si'} {(\G_1', x:A_x[\si_1], \G_c)}$.
If $\si^\prime$ is an mgu for $(A_x[\si_1] = A_c, x = \lfApp c
{\G_c})$, then we get a proof after instantiation using \ksiguni rule:
\begin{smallgather}
\infer[\ksiguni]{
  \mcSeq
      {\G_1^\prime, x:A_x[\si_1], \G_2[\si_1]}
      {\D[\si_1]}
      {\Si,c:\lfTAbsa {\G_c} {A_c}}
      F[\si_1]
}{
  \infer[]{
    \mSeq {\G_1', \G_c', \G_2[\sigma_1][\si']} {\D[\sigma_1]} {F[\sigma_1]}
  }{
    \drv^{\prime\prime}
  }
  &
  \drv_2^\prime
}  
\end{smallgather}
where $\drv_2^\prime$ is the rest of the instantiated proof.
Here we use the fact that $\G_2[\si][\si_1] = \G_2[(\si_1, (\lfApp c
  {\G_c})[\substComp {\si_c}{\si_1'}]/x, \substComp {\si_c} {\si_1'})]
= \G_2[\si_1][\si']$.

We then prove that $\si^\prime$ is indeed an mgu for $(A_x[\si_1] =
A_c, x = \lfApp c {\G_c})$.
\begin{enumerate}
\item
$\si^\prime$ is a unifier.
Since $\si$ is an mgu for $(A_x = A_c, x = \lfApp c {\G_c})$, we have
$A_x[\si] = A_c[\si]$ which is equivalent to $A_x = A_c[\si_c]$.
Thus $A_x[\si_1][\si^\prime] = A_x[\si_1] = A_c[\si_c][\si_1] = A_c[\substComp {\si_c} {\si_1'}] = A_c[\si^\prime]$.
Also $x[\si^\prime] = (\lfApp c {\G_c})[\substComp {\si_c} {\si_1'}] = (\lfApp c
{\G_c})[\si^\prime]$.

\item
$\si'$ is an mgu.
Let $\si_2 = (\si_{1_2}, M/x, \si_{c_2})$ be a unifier for
$(A_x[\si_1] = A_c, x = \lfApp c {\G_c})$ such that $\lfSeq {\G_2}
{\Si} {\si_2} {(\G_1', x:A_x[\si_1], \G_c)}$.
Let $\si_1'' = (\si_1, \si_{id})$ such that $\lfSeq {\G_1',
  x:A_x[\si_1], \G_c} {\Si} {\si_1''} {(\G_1, x:A_x, \G_c)}$.
We have $\substComp {\si_1''} {\si_2}$ as a unifer for $(A_x = A_c, x =
\lfApp c {\G_c})$ such that $\lfSeq {\G_2} {\Si} {\substComp {\si_1''}
  {\si_2}} {(\G_1, x:A_x, \G_c)}$.
Since $\si$ is an mgu for $(A_x = A_c, x = \lfApp c {\G_c})$, there
exists some $\si_3 = (\si_{1_3}, \si_{c_3})$ such that $\lfSeq {\G_2}
{\Si} {\si_3} {(\G_1, \G_c')}$ and $\substComp {\si} {\si_3} =
\substComp {\si_1''} {\si_2}$.
From $\substComp {\si} {\si_3} = \substComp {\si_1''} {\si_2}$, we
have $\si_{1_3} = \substComp {\si_1} {\si_{1_2}}$ and $\substComp
{\si_c} {\si_{3}} = \si_{c_2}$.
Letting $\si_4 = (\si_{1_2}, \si_{c_3})$, we prove that $\si_2 =
\substComp {\si'} {\si_4}$.
For this we need to prove the following:
\begin{itemize}
\item
$\si_{1_2} = \substComp {\si_{id}} {\si_4}$. This is obvious.
\item 
$\si_{c_2} = \substComp {(\substComp {\si_c} {\si_1'})} {\si_4}$.
We have $\substComp {\si_1'} {\si_4} = (\substComp {\si_1} {\si_{1_2}}, \si_{c_3}) 
= (\si_{1_3}, \si_{c_3}) = \si_3$. 
Thus $\si_{c_2} = \substComp {\si_c} {\si_{3}} = \substComp {\si_c} {(\substComp {\si_1'} {\si_4})}
= \substComp {(\substComp {\si_c} {\si_1'})} {\si_4}$.
\item
$M = (\lfApp c {\G_c})[\substComp {\si_c} {\si_1'}][\si_4]$.
Since $x[\si_2] = (\lfApp c {\G_c})[\si_2]$, i.e., $M = (\lfApp c {\G_c})[\si_2]$, we have
$M = (\lfApp c {\G_c})[\si_{c_2}] = (\lfApp c {\G_c})[\substComp {(\substComp {\si_c} {\si_1'})} {\si_4}]
= (\lfApp c {\G_c})[\substComp {\si_c} {\si_1'}][\si_4]$
\end{itemize}
Thus, for any unifier $\si_2$ for $(A_x[\si_2] = A_c, x = \lfApp c
{\G_c})$ there exists a substitution $\si_4$ such that $\si_2 = \substComp {\si'} {\si_4}$.
We conclude that $\si'$ is an mgu for $(A_x[\si_2] = A_c, x = \lfApp c {\G_c})$.
\end{enumerate}

\end{proof}

\end{document}